\newcommand\calB{{\cal B}}
\newcommand\calC{{\cal C}}
\newcommand\calL{{\cal L}}
\newcommand{\tint}{\operatorname{int}}
\newcommand{\tBCH}{\operatorname{BCH}}
\newcommand{\SAT}{\mathsf{SAT}}
\newcommand{\SVP}{\mathsf{SVP}}
\newcommand{\CVP}{\mathsf{CVP}}
\newcommand{\GapSVPnorm}[2]{\mathsf{GapSVP}^{#1}_{#2}}
\newcommand{\NP}{\mathsf{NP}}
\newcommand{\RP}{\mathsf{RP}}
\newcommand{\RTIME}{\mathsf{RTIME}}
\newcommand{\RSUBEXP}{\mathsf{RSUBEXP}}
\renewcommand{\P}{\mathsf{P}}
\newcommand{\DTIME}{\mathsf{DTIME}}
\newcommand{\YES}{\mathsf{YES}}
\newcommand{\NO}{\mathsf{NO}}
\newcommand{\eps}{\epsilon}
\renewcommand{\epsilon}{\varepsilon}
\newcommand{\poly}{\mathop{\mathrm{poly}}}
\newcommand{\trace}{\mathop{\mathrm{tr}}}
\newcommand{\linspan}{\mathop{\mathrm{span}}}
\runningtitle{Tensor-based Hardness of the Shortest Vector Problem}
\runningauthor{Ishay Haviv and Oded Regev}
\begin{document}

\begin{frontmatter}[classification=text]
\title{Tensor-based Hardness of the\\ Shortest Vector Problem to within\\ Almost Polynomial Factors}

\author[ishay]{Ishay Haviv\thanks{Supported by the Adams Fellowship Program of the Israel
Academy of Sciences and Humanities. Work done while at Tel Aviv University.}}
\author[oded]{Oded Regev\thanks{Supported by an Alon Fellowship, by the Binational Science Foundation, by the Israel Science Foundation, by the European Commission under the Integrated Project QAP funded by the IST directorate as Contract Number 015848, and by the European Research Council (ERC) Starting Grant.}}

\begin{abstract}
We show that unless $\NP \subseteq \RTIME (2^{\poly(\log{n})})$, there is no polynomial-time algorithm approximating the Shortest Vector Problem ($\SVP$) on $n$-dimensional lattices in the $\ell_p$ norm ($1 \leq p<\infty$) to within a factor of
$2^{(\log{n})^{1-\eps}}$ for any $\eps >0$. This improves the previous best
factor of $2^{(\log{n})^{1/2-\eps}}$ under the same complexity assumption due
to Khot (J.~ACM,~2005). Under the stronger assumption $\NP \nsubseteq \RSUBEXP$, we obtain a hardness factor of $n^{c/\log\log{n}}$ for some $c>0$.

Our proof starts with Khot's $\SVP$ instances that are hard
to approximate to within some constant. To boost the hardness factor we simply apply the standard tensor product of lattices. The main novelty is in the analysis, where we show that the lattices of Khot behave
nicely under tensorization. At the heart of the analysis is a
certain matrix inequality which was first used in the context
of lattices by de Shalit and Parzanchevski (2006).
\end{abstract}

\tocacm{F.2.2, F.1.3, G.1.6}
\tocams{68Q17, 52C07, 11H06, 11H31, 05B40}

\tockeywords{lattices, shortest vector problem, NP-hardness, hardness of approximation}

\end{frontmatter}

\section{Introduction}

A \emph{lattice} is a periodic geometric object defined as the set of all integer
combinations of some linearly independent vectors in $\R^n$. The interesting
combinatorial structure of lattices has been investigated by mathematicians over the
last two centuries, and for at least three decades it has also been studied from an asymptotic
algorithmic point of view. Roughly speaking, most fundamental
problems on lattices are not known to be efficiently solvable. Moreover, there
are hardness results showing that such problems cannot be solved by
polynomial-time algorithms unless the polynomial-time hierarchy collapses. One of
the main motivations for research on the hardness of lattice problems is their
applications in cryptography, as was demonstrated by
Ajtai~\cite{ajtai96generating}, who came up with a construction of
cryptographic primitives whose security relies on the worst-case hardness of
certain lattice problems.

Two main computational problems associated with lattices are the
Shortest Vector Problem ($\SVP$) and the Closest Vector Problem
($\CVP$). In the former, for a lattice given by \emph{some} basis we
are supposed to find (the length of) a shortest nonzero vector in
the lattice. The problem $\CVP$ is an inhomogeneous variant of
$\SVP$, in which given a lattice and some target point one has to
find (its distance from) the closest lattice point. The hardness of
lattice problems partly comes from the fact that there are many
possible bases for the same lattice.

In this paper we improve the best hardness result known for $\SVP$. Before
presenting our results let us start with an overview of related work.

\subsection{Related work}

In the early 1980s, Lenstra, Lenstra, and Lov\'asz (LLL)~\cite{LLL82} presented the first
polynomial-time approximation algorithm for $\SVP$. Their
algorithm achieves an approximation factor of $2^{O(n)}$, where $n$ is the
dimension of the lattice. Using their algorithm, Babai~\cite{Babai86} gave an approximation
algorithm for $\CVP$ achieving the same approximation factor. A
few years later, improved algorithms were presented for both problems,
obtaining a slightly sub-exponential approximation factor, namely
$2^{O(n(\log\log{n})^2 / \log{n})}$~\cite{SchnorrBKZ}, and this has since been
improved slightly~\cite{AjtaiKS01,MicV10}.
The best algorithm known for solving
$\SVP$ \emph{exactly} requires exponential running time in
$n$~\cite{sep:Kannan:87,AjtaiKS01,MicV10}. All the above results hold with respect to
any $\ell_p$ norm ($1 \leq p \leq \infty$).

On the hardness side, it was proven in 1981 by van Emde Boas~\cite{vEB81} that it is
$\NP$-hard to solve $\SVP$ exactly in the $\ell_\infty$ norm. The
question of extending this result to other norms, and in particular to the
Euclidean norm $\ell_2$, remained open until the breakthrough result by Ajtai~\cite{Ajtai98}
showing that exact $\SVP$ in the $\ell_2$ norm is $\NP$-hard under randomized
reductions. Then, Cai and Nerurkar~\cite{CaiNerurkar99} obtained hardness of
approximation to within $1+n^{-\eps}$ for any $\eps>0$.
The first inapproximability result of $\SVP$ to within a factor bounded
away from $1$ is that of Micciancio~\cite{Mic01svp}, who showed that under randomized
reductions $\SVP$ in the $\ell_p$ norm is $\NP$-hard to approximate to within
any factor smaller than $\sqrt[p]{2}$. For the $\ell_\infty$
norm, a considerably stronger result is known:
Dinur~\cite{Dinur00} showed that $\SVP$ is $\NP$-hard to
approximate in the $\ell_\infty$ norm to within a factor of
$n^{c/\log\log{n}}$ for some constant $c>0$.

To date, the strongest hardness result known for $\SVP$ in the $\ell_p$
norm is due to Khot~\cite{Khot05svp} who showed $\NP$-hardness of approximation
to within arbitrarily large constants under randomized reductions for any $1 < p
<\infty$. Furthermore, under randomized quasipolynomial-time reductions (\ie,
reductions that run in time $2^{\poly(\log{n})}$), the hardness factor becomes
$2^{(\log{n})^{1/2-\eps}}$ for any $\eps>0$. Khot speculated there that it might be possible to improve this to $2^{(\log{n})^{1-\eps}}$, as this is the hardness factor known for the analogous problem in linear codes~\cite{dumer99hardness}.

Khot's proof does not work for the $\ell_1$ norm. However, it was shown in~\cite{RegevR06}
that for lattice problems, the $\ell_2$ norm is the easiest
in the following sense: for any $1\le p\le \infty$, there exists a randomized
reduction from lattice problems such as $\SVP$ and $\CVP$ in the $\ell_2$ norm
to the respective problem in the $\ell_p$ norm with essentially the same approximation factor.
In particular, this implies that Khot's results also hold for the $\ell_1$ norm.

Finally, we mention that a considerably stronger result is known for $\CVP$,
namely that for any $1\le p\le \infty$, it is $\NP$-hard to approximate $\CVP$
in the $\ell_p$ norm to within $n^{c/\log\log{n}}$ for some constant $c>0$~\cite{DinurKRS03}.
We also mention that in contrast to the above hardness results, it is known that for any $c>0$, $\SVP$ and $\CVP$
are unlikely to be $\NP$-hard to approximate to within a $\sqrt{c n/\log n}$ factor, as this
would imply the collapse of the polynomial-time
hierarchy~\cite{Goldreich:1998:LNA,AharonovR04}.

\subsection{Our results}

The main result of this paper improves the best $\NP$-hardness factor known for
$\SVP$ under randomized quasipolynomial-time reductions. This and two additional
hardness results are stated in the following theorem. Here, $\RTIME$ is the randomized one-sided
error analogue of $\DTIME$. Namely, for a function $f$ we denote by
$\RTIME(f(n))$ the class of problems having a probabilistic algorithm running in time $O(f(n))$ on inputs of size $n$ that accepts $\YES$ inputs
with probability at least $2/3$, and rejects $\NO$ inputs with certainty.

\begin{theorem}\label{thm:Intro}
For every $1 \leq p \leq \infty$ the following holds.
\begin{enumerate}
    \item \label{itm:intro_1}
For every constant $c\ge 1$, there is no polynomial-time
algorithm that approximates $\SVP$ in the $\ell_p$ norm to within a
factor of $c$ unless 
\[
\NP \subseteq \RP = \bigcup_{c \ge 1} \RTIME (n^c)\,.
\] 
    \item \label{itm:intro_2}
For every $\eps>0$, there is no polynomial-time
algorithm that approximates $\SVP$ on $n$-dimensional lattices in the $\ell_p$ norm to within
    a factor of $2^{(\log{n})^{1-\eps}}$ unless 
\[
\NP \subseteq
\RTIME (2^{\poly(\log{n})})\,.
\]
    \item \label{itm:intro_3}
There exists a $c>0$ such that there is no polynomial-time
algorithm that approximates $\SVP$ on $n$-dimensional lattices in the $\ell_p$ norm to within
    a factor of $n^{c/\log\log{n}}$ unless 
\[
\NP \subseteq
\RSUBEXP = \bigcap_{\delta>0} \RTIME (2^{n^\delta})\,.
\]  %
\end{enumerate}
\end{theorem}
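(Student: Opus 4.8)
The plan is to obtain all three parts from one reduction: start from the $\SVP$ instances in the Euclidean norm that Khot~\cite{Khot05svp} proved to be $\NP$-hard (under randomized reductions) to approximate within some constant $c_0>1$, amplify the gap by replacing the lattice $L$ by its $k$-fold tensor power $L^{\otimes k}$ for a suitable $k=k(n)$, and finally pass from the $\ell_2$ norm to an arbitrary $\ell_p$ norm via the reduction of~\cite{RegevR06}, which is randomized, preserves the approximation factor up to lower-order terms, and works for every $1\le p\le\infty$. The three parts will correspond to three regimes for the number of tensor factors: $k=O(1)$ for part~\ref{itm:intro_1} (indeed this part is already immediate from Khot's theorem, which yields every constant, together with~\cite{RegevR06}); $k$ quasipolynomial in $n$ for part~\ref{itm:intro_2}; and $k$ a small power of $n$ for part~\ref{itm:intro_3}. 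Since $L^{\otimes k}$ has dimension $N:=n^k$ and its basis is the $k$-fold Kronecker power of that of $L$, the tensor step, and hence the whole reduction, runs in time $n^{O(k)}=2^{O(k\log n)}$.

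The real content is the behavior of the gap under tensoring. Completeness is immediate and holds in every $\ell_p$ norm: if $v\in L$ has $\|v\|_p\le s$ then the pure tensor $v^{\otimes k}\in L^{\otimes k}$ has $\|v^{\otimes k}\|_p=\|v\|_p^k\le s^k$, so a $\YES$ instance with $\lambda_1^{(p)}(L)\le s$ maps to a lattice with $\lambda_1^{(p)}(L^{\otimes k})\le s^k$. Soundness is the crux, and this is precisely where arbitrary lattices fail: the tensor product of two lattices can contain vectors strictly shorter than the product of the two shortest vectors (this already occurs for the tensor square of certain lattices in high enough dimension), so no black-box argument is possible. I would therefore unpack Khot's construction, isolate a structural property $(\star)$ of his $\NO$-instance lattices — phrased as a property of the Gram matrix $\mat G_L=\mat B_L^{\top}\mat B_L$ of a suitable basis $\mat B_L$ — and prove the key lemma: \emph{if $L$ satisfies $(\star)$ then $\lambda_1^{(2)}(L\otimes M)\ge \lambda_1^{(2)}(L)\cdot\lambda_1^{(2)}(M)$ for every lattice $M$.} A nonzero vector of $L\otimes M$ corresponds, via the product basis, to a nonzero integer matrix $X$, and its squared Euclidean length equals $\trace(\mat G_L X \mat G_M X^{\top})$, so the lemma becomes the matrix inequality
\[
\trace(\mat G_L X \mat G_M X^{\top}) \ge \Big(\min_{0\neq y\in\Zset^{n}} y^{\top}\mat G_L y\Big)\cdot\Big(\min_{0\neq z\in\Zset^{m}} z^{\top}\mat G_M z\Big)
\]
for every integer $X\neq 0$ and every positive-definite $\mat G_M$, under the hypothesis $(\star)$ on $\mat G_L$ (the two minima being exactly $\lambda_1^{(2)}(L)^2$ and $\lambda_1^{(2)}(M)^2$). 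This is the point at which the matrix inequality used in this context by de Shalit and Parzanchevski enters; the naive argument only replaces the integer minimum $\min_{y\in\Zset^n} y^{\top}\mat G_L y$ by the smallest \emph{real} eigenvalue of $\mat G_L$, which is far too weak. Iterating the lemma with $M=L^{\otimes(j-1)}$ gives $\lambda_1^{(2)}(L^{\otimes k})=\lambda_1^{(2)}(L)^k$ for $\NO$ instances (the reverse inequality being trivial), so a $\NO$ instance with $\lambda_1^{(2)}(L)>c_0 s$ maps to a lattice with $\lambda_1^{(2)}(L^{\otimes k})>(c_0 s)^k$; the reduction thus produces a $\GapSVP$ instance in $\ell_2$ with gap $c_0^k$, which~\cite{RegevR06} carries over to $\ell_p$.

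It remains to choose $k$ and to do the bookkeeping; recall that the reduction from $\SAT$ instances of size $m$ runs in time $2^{O(k\log n)}$ with $n=\poly(m)$, so $\log n=\Theta(\log m)$ and $\log N=k\log n$. For part~\ref{itm:intro_1} take $k=\ceil{\log c/\log c_0}$: the reduction is polynomial, so a polynomial-time $c$-approximation for $\SVP$ in $\ell_p$ would give $\NP\subseteq\RP$. For part~\ref{itm:intro_2}, given $\eps>0$ set $k=(\log n)^{\ceil{2/\eps}}$: the reduction runs in time $2^{\poly(\log n)}=2^{\poly(\log m)}$, while $\log N=(\log n)^{\ceil{2/\eps}+1}$ makes the gap $c_0^k=2^{(\log c_0)(\log N)^{1-1/(\ceil{2/\eps}+1)}}$, whose exponent of $\log N$ exceeds $1-\eps$, so the gap exceeds $2^{(\log N)^{1-\eps}}$ for all large $N$; hence a polynomial-time $2^{(\log N)^{1-\eps}}$-approximation would give $\NP\subseteq\RTIME(2^{\poly(\log n)})$. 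For part~\ref{itm:intro_3}, one pushes $k$ up to a small power $n^{\delta}$ of the dimension, making the reduction sub-exponential ($2^{m^{O(\delta)}}$): since then $\log\log N=\log(k\log n)=\Theta(\log n)$, the gap $c_0^{k}=2^{\Theta(k)}=2^{\Theta(\log N/\log\log N)}$ equals $N^{\Theta(1/\log\log N)}$, and balancing the running time against the gap shows that, for a suitable constant $c>0$, a polynomial-time $N^{c/\log\log N}$-approximation would give $\NP\subseteq\RSUBEXP$. In all three parts the case $p=\infty$ follows the same way via~\cite{RegevR06} (and for the last factor is anyway contained in Dinur's $\ell_\infty$ hardness~\cite{Dinur00}).

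I expect the main obstacle to be the soundness step, with its two intertwined ingredients. First, one must open up Khot's BCH-lattice construction in order to extract the precise property $(\star)$ that survives tensorization — as opposed to treating Khot's result as a black box — and identifying the right invariant of the $\NO$-instance Gram matrix is the key conceptual point. Second, one must prove the matrix inequality displayed above under $(\star)$: since the trivial argument (passing to the least eigenvalue) discards all of the gap, the proof has to genuinely use the structure of $\mat G_L$, for instance by splitting $X$ into its symmetric and antisymmetric parts and decomposing $\mat G_L$ according to $(\star)$, so as to recover the \emph{integer} minimum; this is exactly the role played by the de Shalit--Parzanchevski inequality. Everything else should be routine-but-careful bookkeeping: checking that the $\ell_2$-tensorization composes correctly with the $\ell_2\to\ell_p$ reduction, and that the gaps and dimensions multiply as claimed in each of the three parameter regimes.
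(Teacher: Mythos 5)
Your high-level plan is the right one and matches the paper: tensor Khot's basic instances, transfer from $\ell_2$ to $\ell_p$ via~\cite{RegevR06}, and pick $k$ appropriately in each of the three regimes (the bookkeeping in your last paragraph is essentially correct). But there is a genuine gap precisely where you say you ``expect the main obstacle'' to be, and the shape of the fix you sketch is not the one that works.

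The property $(\star)$ that makes Khot's $\NO$ instances tensor well is not naturally a property of the Gram matrix, and the proof of the key lemma is not a single matrix inequality proved by ``splitting $X$ into symmetric and antisymmetric parts'' (note $X$ is a rectangular coefficient matrix and has no such decomposition). What the paper extracts from Khot's construction is a \emph{coordinate-wise} property of the lattice $\calL(B)\subseteq\Z^n$: every nonzero $v\in\calL(B)$ either has at least $d$ nonzero coordinates, or has all coordinates even and at least $d/4$ of them nonzero, or has all coordinates even and $\|v\|_2\ge d$. This is then lifted to sublattices: for any rank-$r$ sublattice $\calL\subseteq\calL(B)$, either every basis matrix of $\calL$ has at least $d$ nonzero rows, or every basis matrix has only even entries and at least $d/4$ nonzero rows, or $\det(\calL)\ge d^{r/2}$ (this last via Minkowski). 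The soundness lemma — in the form $\lambda_1(\calL_1\otimes\calL_2)\ge\sqrt d\cdot\lambda_1(\calL_2)$, which is all that is needed and slightly weaker than your $\lambda_1(\calL_1)\cdot\lambda_1(\calL_2)$ version — is then a three-way case analysis after writing an arbitrary $v\in\calL_1\otimes\calL_2$ as $B_1'(B_2')^{\top}$ with $B_1',B_2'$ full column rank. In the first two cases the bound is elementary: many rows of $B_1'(B_2')^{\top}$ are nonzero (or even multiples of nonzero) vectors of $\calL(B_2')$. Only in the third case is the de Shalit--Parzanchevski ingredient used, and it is a soft inequality: $\|UW^{\top}\|_2^2=\trace(G_2^{1/2}G_1G_2^{1/2})\ge r\,(\det G_1\det G_2)^{1/r}$ by AM--GM on the eigenvalues of the positive-definite matrix $G_2^{1/2}G_1G_2^{1/2}$, which combined with Minkowski on $\calL(B_2')$ and $\det(\calL(B_1'))\ge d^{r/2}$ yields the bound. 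So the missing idea in your proposal is the coordinate/evenness/determinant trichotomy for sublattices, together with the realization that most of the work is combinatorial (cases 1 and 2) and the matrix inequality only handles the residual ``high determinant'' case — not a single unconditional Gram-matrix inequality.

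Two smaller points: you do not actually need the stronger claim $\lambda_1(\calL^{\otimes k})=\lambda_1(\calL)^k$; the paper only shows $\lambda_1(\calL^{\otimes k})\ge d^{k/2}$, which suffices because $\YES$ instances have $\lambda_1\le\gamma\sqrt d$. And the reduction as the paper carries it out has two-sided error (Khot's step already does), so part~\ref{itm:intro_1} additionally invokes the standard downward self-reducibility of $\SAT$ to land in $\RP$ rather than $\mathsf{BPP}$; your sketch should say this explicitly.
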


\expref{Theorem}{thm:Intro} improves on the best known hardness result for any $p <
\infty$. For $p=\infty$, a better hardness result is already known, namely that
for some $c>0$, approximating to within $n^{c/\log\log{n}}$ is $\NP$-hard~\cite{Dinur00}.
Moreover, \expref{item}{itm:intro_1} was already proved by Khot~\cite{Khot05svp} and we provide an alternative proof.
We remark that all three items follow from a more general statement (see \expref{Theorem}{thm:IntroGeneral}).

\subsection{Techniques}

A standard method to prove hardness of approximation for large constant or
super-constant factors is to first prove hardness for some fixed constant
factor, and then \emph{amplify} the constant using some polynomial-time (or
quasipolynomial-time) transformation. For example, the \emph{tensor product} of
linear codes is used to amplify the $\NP$-hardness of approximating the minimum
distance in a linear code of block length $n$ to arbitrarily large constants
under polynomial-time reductions and to $2^{(\log{n})^{1-\eps}}$ (for any
$\eps>0$) under quasipolynomial-time reductions~\cite{dumer99hardness}. This
example motivates one to use the tensor product of lattices to increase the
hardness factor known for approximating $\SVP$. However, whereas the minimum
distance of the $k$-fold tensor product of a code $\calC$ is simply the $k$th
power of the minimum distance of $\calC$, the behavior of the length of a
shortest nonzero vector in a tensor product of lattices is more complicated and
not so well understood.

Khot's approach in~\cite{Khot05svp} was to prove a constant hardness factor for
$\SVP$ instances that have some ``code-like'' properties. The rationale is that
such lattices might behave in a more predictable way under the tensor product.
The construction of these ``basic'' $\SVP$ instances is ingenious, and is based
on BCH codes as well as a restriction into a random sublattice.
However, even for these code-like lattices, the behavior of the tensor product
was not clear. To resolve this issue, Khot introduced
a variant of the tensor product, which he called \emph{augmented tensor product}, and
using it he showed the hardness factor of $2^{(\log{n})^{1/2-\eps}}$. This
unusual hardness factor can be seen as a result of the augmented tensor
product. In more detail, for the augmented tensor product to work, the dimension of Khot's basic
$\SVP$ instances grows to $n^{\Theta(k)}$, where $k$ denotes the number of times
we intend to apply the augmented tensor product. After
applying it, the dimension grows to $n^{\Theta(k^2)}$ and
the hardness factor becomes $2^{\Theta(k)}$. This limits the hardness factor as a
function of the dimension $n$ to $2^{(\log{n})^{1/2-\eps}}$.

Our main contribution is showing that Khot's basic $\SVP$ instances
\emph{do} behave well under the (standard) tensor product.
The proof of this fact uses a new method to analyze vectors in the tensor product of
lattices, and is related to a technique used by de Shalit and Parzanchevski~\cite{DeShalit06}.
\expref{Theorem}{thm:Intro} now follows easily: we start
with (a minor modification of) Khot's basic $\SVP$ instances, which
are known to be hard to approximate to within some constant.
We then apply the $k$-fold tensor product for appropriately chosen
values of $k$ and obtain instances of dimension $n^{O(k)}$ with hardness
$2^{\Omega(k)}$.

\subsection{Open questions}

Some open problems remain. The most obvious is proving that $\SVP$ is hard to
approximate to within factors greater than $n^{c/\log\log{n}}$
under some plausible complexity assumption. Such a result, however,
is not known for $\CVP$ nor for the minimum distance problem in linear codes,
and most likely proving it there first would be easier.
An alternative goal is to improve on the $O(\sqrt{n/\log n})$ upper bound beyond which $\SVP$
is not believed to be $\NP$-hard~\cite{Goldreich:1998:LNA,AharonovR04}.

A second open question is whether our complexity assumptions can be weakened.
For instance, our $n^{c/\log\log{n}}$ hardness result is based on the assumption
that $\NP \nsubseteq \RSUBEXP$. For $\CVP$, such a hardness factor is known based
solely on the assumption $\P \neq \NP$~\cite{DinurKRS03}. Showing something similar for
$\SVP$ would be very interesting.
In fact, coming up with a \emph{deterministic} reduction (even for constant
approximation factors) already seems very challenging; all known hardness
proofs for $\SVP$ in $\ell_p$ norms, $p<\infty$, use randomized
reductions. (We note, though, that~\cite{Mic01svp} does describe a deterministic reduction
based on a certain number-theoretic conjecture.) Ideas appearing in the recent
$\NP$-hardness proofs of the minimum distance problem in linear codes~\cite{ChengWan,AustrinKhot}
might be useful. Finally, we mention that a significant step towards derandomization
was recently made by Micciancio~\cite{MicSVP12}: he strengthened our results
by showing reductions with only one-sided error.

\subsection{Outline}

The rest of the paper is organized as follows. In
\expref{Section}{sec:preliminaries} we gather some background on lattices and on
the central tool in this paper---the tensor product of lattices. In
\expref{Section}{sec:proof} we prove \expref{Theorem}{thm:Intro}. For the sake of completeness,
\expref{Section}{appendix:Khot} provides a summary of Khot's
work~\cite{Khot05svp} together with the minor modifications that
we need to introduce.

\section{Preliminaries}\label{sec:preliminaries}

\subsection{Lattices}

A \emph{lattice} is a discrete additive subgroup of $\mathbb{R}^n$.
Equivalently, it is the set of all integer combinations
$$\calL(b_1,\ldots,b_m)=\left\{\sum^{m}_{i=1}{x_i \,b_i}:x_i \in \mathbb{Z}\mbox{ for all }1\leq i\leq m\right\}$$ of $m$
linearly independent vectors $b_1,\ldots , b_m$ in $\mathbb{R}^n$ $(n \geq m)$.
If the rank $m$ equals the dimension $n$, then we say that the lattice is \emph{full-rank}. The set $\{b_1,\ldots , b_m\}$ is called a \emph{basis} of the
lattice. Note that a lattice has many possible bases. We often represent a
basis by an $n \times m$ matrix $B$ having the basis vectors as columns, and we
say that the basis $B$ \emph{generates} the lattice $\calL$. In such case we
write $\calL = \calL(B)$. It is well known and easy to verify that
two bases $B_1$ and $B_2$ generate the same lattice if and only if $B_1 = B_2
U$ for some \emph{unimodular} matrix $U \in\Z^{m \times m}$ (\ie, a matrix
whose entries are all integers and whose determinant is $\pm 1$). The \emph{determinant} of a lattice generated by a basis $B$ is $\det(\calL(B)) =
\sqrt{\det{(B^{T}B)}}$. It is easy to show that the determinant of a lattice is
independent of the choice of basis and is thus well-defined. A \emph{sublattice} of $\calL$ is a lattice $\calL(S) \subseteq \calL$ generated by
some linearly independent lattice vectors $S=\{s_1,\ldots,s_r\}\subseteq
\calL$. It is known that any integer matrix $B$ can be written as $[H \;
0]U$ where $H$ has full column rank and $U$ is unimodular. One way to achieve
this is by using the Hermite Normal Form (see, \eg, \cite[Page
67]{CohenCourse}).

For any $1 \leq p < \infty$, the $\ell_p$ norm of a vector $x \in \R^n$ is
defined as $\|x\|_p=\sqrt[p]{\sum_i{|x_i|^p}}$ and its $\ell_\infty$ norm is
$\|x\|_\infty = \max_i{|x_i|}$. One basic parameter of a lattice $\calL$, denoted by
$\lambda_1^{(p)}(\calL)$, is the $\ell_p$ norm of a shortest nonzero vector in
it. Equivalently, $\lambda_1^{(p)}(\calL)$ is the minimum $\ell_p$ distance
between two distinct points in the lattice $\calL$. This definition can be
generalized to define the $i$th \emph{successive minimum} as the smallest $r$
such that $\calB_p(r)$ contains $i$ linearly independent lattice points, where
$\calB_p(r)$ denotes the $\ell_p$ ball of radius $r$ centered at the origin.
More formally, for any $1\le p\le \infty$, we define
\[
\lambda_i^{(p)}(\calL) = \min\Bigl\{r: \dim\Bigl(\linspan\bigl(\calL \cap
\calB_p(r)\bigr)\Bigr) \geq i\Bigr\}\,.
\]
We often omit the superscript in $\lambda_i^{(p)}$ when $p=2$.

In 1896, Hermann Minkowski~\cite{mink:1896} proved the following classical
result, known as Minkowski's First Theorem. We consider here the
$\ell_2$ norm, although the result has an easy extension to other norms. For
a simple proof the reader is referred to~\cite[Chapter 1, Section 1.3]{MicciancioBook}.

\begin{theorem}[Minkowski's First Theorem]\label{thm:mink_first}
For any rank-$r$ lattice $\calL$,
$$\det(\calL) \geq \left(\frac{\lambda_1(\calL)}{\sqrt{r}}\right)^r.$$
\end{theorem}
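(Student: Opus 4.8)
The plan is to obtain this as a consequence of Minkowski's Convex Body Theorem, which in turn follows from a pigeonhole argument. First I would reduce to the full-rank case: given a rank-$r$ lattice $\calL$, pass to the $r$-dimensional subspace $V=\linspan(\calL)$ and fix a linear isometry $V\to\Rset^r$; this carries $\calL$ to a full-rank lattice in $\Rset^r$ while preserving both $\det(\calL)$ and the Euclidean length of every vector, hence also $\lambda_1(\calL)$. So assume $\calL$ is full-rank in $\Rset^r$, and let $\mathcal P$ be a fundamental parallelepiped, so that $\mathrm{vol}(\mathcal P)=\det(\calL)$ and the translates $\{\mathcal P+v:v\in\calL\}$ tile $\Rset^r$.

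The first step (Blichfeldt) is that any measurable $S\subseteq\Rset^r$ with $\mathrm{vol}(S)>\det(\calL)$ contains two distinct points whose difference is a nonzero element of $\calL$: translating each piece $S\cap(\mathcal P+v)$ back by $-v$ yields subsets of $\mathcal P$ whose volumes sum to $\mathrm{vol}(S)>\mathrm{vol}(\mathcal P)$, so two of them intersect, and a common point $z$ of the $v_1$- and $v_2$-pieces gives $z+v_1,z+v_2\in S$ with difference $v_1-v_2\in\calL\setminus\{0\}$. Applying this to $\tfrac12 K$ when $K$ is convex, centrally symmetric, with $\mathrm{vol}(K)>2^r\det(\calL)$, one gets distinct $x,y\in\tfrac12 K$ with $x-y\in\calL\setminus\{0\}$; then $2x,2y\in K$, and by central symmetry and convexity $x-y=\tfrac12\bigl(2x+(-2y)\bigr)\in K$, so $K$ contains a nonzero lattice point. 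Now take $K$ to be the open cube $(-t,t)^r$, of volume $(2t)^r$, which exceeds $2^r\det(\calL)$ exactly when $t>\det(\calL)^{1/r}$. For such $t$ there is a nonzero $v\in\calL$ with $\norm{v}_\infty<t$, whence $\lambda_1(\calL)\le\norm{v}_2\le\sqrt r\,\norm{v}_\infty<\sqrt r\,t$. Letting $t\downarrow\det(\calL)^{1/r}$ gives $\lambda_1(\calL)\le\sqrt r\,\det(\calL)^{1/r}$, which rearranges to the claimed inequality.

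There is no genuine obstacle, as the result is classical; the only points requiring care are the reduction to the full-rank case (so that $\mathrm{vol}$ and $\det$ are normalized compatibly) and the handling of strict versus non-strict inequalities—Minkowski's theorem needs a strict volume bound, so the cube must be taken open, with the final estimate recovered through the limit $t\downarrow\det(\calL)^{1/r}$ (equivalently, by observing that the determinants violating the bound would form an open set while $\det(\calL)$ is a fixed number).
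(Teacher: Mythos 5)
Your proof is correct: it is the classical derivation via Blichfeldt's volume-overlap argument and Minkowski's Convex Body Theorem applied to the open cube $(-t,t)^r$, and you handle both the reduction to the full-rank case (so that $\det$ and volume are normalized compatibly) and the strict-versus-nonstrict boundary via the limit $t\downarrow\det(\calL)^{1/r}$ carefully. Note that the paper itself does not prove this statement---it cites Minkowski's 1896 work and refers the reader to Micciancio and Goldwasser's book for a proof, which takes essentially the same route you do---so there is no in-paper argument to contrast with yours.
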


Our hardness of approximation results will be shown through the promise version
$\GapSVPnorm{p}{\gamma}$, defined for any $1\leq p \le \infty$ and for any
approximation factor $\gamma \leq 1$ as follows.

\begin{definition}[Shortest Vector Problem]\label{def:SVP}
An instance of $\GapSVPnorm{p}{\gamma}$ is a pair $(B,s)$, where $B$ is a
lattice basis and $s$ is a number. In $\YES$ instances
$\lambda_1^{(p)}(\calL(B)) \leq \gamma \cdot s$, and in $\NO$ instances
$\lambda_1^{(p)}(\calL(B)) > s$.
\end{definition}

\subsection{Tensor product of lattices}

A central tool in the proof of our results is the \emph{tensor product} of lattices. Let us first recall some basic definitions. For two column vectors $u$ and $v$ of
dimensions $n_1$ and $n_2$ respectively, we define their tensor product
$u\otimes v$ as the $n_1 n_2$-dimensional column vector
$$
\left( \begin{array}{c}
  u_1 v \\
  \vdots \\
  u_{n_1} v \\
\end{array}\right)\,.
$$
If we think of the coordinates of $u\otimes v$ as arranged in an $n_1 \times
n_2$ matrix, we obtain the equivalent description of $u \otimes v$ as the
matrix $u \cdot v^T$. More generally, any $n_1 n_2$-dimensional vector $w$ can be
written as an $n_1 \times n_2$ matrix $W$. To illustrate the use of this
notation, notice that if $W$ is the matrix corresponding to $w$ then
\begin{equation}\label{eq:normastrace}
\|w\|_2^2 = \trace(W \,W^T)\,.
\end{equation}
Finally, for an $n_1 \times m_1$ matrix $A$ and an $n_2 \times m_2$ matrix $B$,
one defines their tensor product $A \otimes B$ as the $n_1 n_2 \times m_1 m_2$
matrix
$$
\left( \begin{array}{ccc}
  A_{11} B & \cdots & A_{1 m_1} B \\
  \vdots && \vdots \\
  A_{n_1 1} B & \cdots & A_{n_1 m_1} B \\
\end{array}\right)\,.
$$

Let $\calL_1$ be a lattice generated by the $n_1 \times m_1$ matrix $B_1$ and
$\calL_2$ be a lattice generated by the $n_2 \times m_2$ matrix $B_2$. Then the
tensor product of $\calL_1$ and $\calL_2$ is defined as the $n_1
n_2$-dimensional lattice generated by the $n_1 n_2 \times m_1 m_2$ matrix $B_1
\otimes B_2$, and is denoted by $\calL = \calL_1 \otimes \calL_2$. Equivalently,
$\calL$ is generated by the $m_1 m_2$ vectors obtained by taking the tensor product of
two column vectors, one from $B_1$ and one from $B_2$. If we think of the
vectors in $\calL$ as $n_1 \times n_2$ matrices, then we can also define it as
$$\calL = \calL_1 \otimes \calL_2 = \{B_1 X B_2^T : X \in \Z^{m_1 \times m_2}\}\,,$$
with each entry in $X$ corresponding to one of the $m_1 m_2$ generating
vectors. We will mainly use this definition in the proof of the main result.

As alluded to before, in the present paper we are interested in the behavior of
the shortest nonzero vector in a tensor product of lattices. It is easy to see that for
any $1\le p\le \infty$ and any two lattices $\calL_1$ and $\calL_2$, we have
\begin{eqnarray}\label{eq:TensorLeq}
    \lambda_1^{(p)}(\calL_1 \otimes \calL_2) \leq \lambda_1^{(p)}(\calL_1) \cdot \lambda_1^{(p)}(\calL_2).
\end{eqnarray}
Indeed, any two vectors $v_1$ and $v_2$ satisfy $\|v_1 \otimes v_2\|_p = \|v_1\|_p \cdot \|v_2\|_p$. Applying this to shortest nonzero vectors of $\calL_1$ and $\calL_2$ implies inequality~\eqref{eq:TensorLeq}.

Inequality~\eqref{eq:TensorLeq} has an analogue for linear codes, with
$\lambda_1^{(p)}$ replaced by the minimum distance of the code under the Hamming metric. There, it is not too
hard to show that the inequality is in fact an equality: the minimal distance
of the tensor product of two linear codes always equals to the product of their
minimal distances. However, contrary to what one might expect, there exist
lattices for which inequality~\eqref{eq:TensorLeq} is \emph{strict}. More precisely, for any sufficiently large $n$ there exist $n$-dimensional lattices $\calL_1$ and $\calL_2$ satisfying
$$\lambda_1(\calL_1 \otimes \calL_2) < \lambda_1(\calL_1) \cdot \lambda_1(\calL_2)\,.$$
The following lemma due to Steinberg shows this fact. Although we do not use this fact later on, the proof is
instructive and helps motivate the need for a careful analysis of tensor
products. To present this proof we need the notion of a \emph{dual lattice}. For
a full-rank lattice $\calL \subseteq \R^n$, its \emph{dual} lattice $\calL^{*}$
is defined as
$$\calL^{*} = \{x \in \R^n : \langle x,y \rangle\in\Z \mbox{ for
all $y \in \calL$}\}\,.$$ A \emph{self-dual} lattice is one that satisfies
$\calL = \calL^{*}$. It can be seen that for a full-rank lattice $\calL$
generated by a basis $B$, the basis $(B^{-1})^T$ generates the lattice
$\calL^{*}$.

\begin{lemma}[{\cite[Page 48]{MilHus73}}]\label{lem:tensor}
For any $n \ge 1$ there exists an $n$-dimensional self-dual lattice
$\calL$ satisfying $\lambda_1(\calL \otimes \calL^*) \leq \sqrt{n}$ and
$\lambda_1(\calL) = \lambda_1(\calL^*) = \Omega(\sqrt{n})$.
\end{lemma}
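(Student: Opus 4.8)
The plan is to split the lemma into two essentially independent parts: the bound $\lambda_1(\calL\otimes\calL^*)\le\sqrt n$, which will hold for \emph{every} self-dual lattice and is the real content, and the bound $\lambda_1(\calL)=\lambda_1(\calL^*)=\Omega(\sqrt n)$, for which it suffices to produce a \emph{single} self-dual lattice with nearly maximal minimum, a classical fact.

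For the first part I would work with the matrix description of the tensor product. Fix a full-rank basis $B$ of $\calL$; by the fact recalled just before the lemma, $(B^{-1})^T$ is a basis of $\calL^*$, so when the vectors of the tensor lattice are viewed as $n\times n$ matrices we have $\calL\otimes\calL^*=\{B\,X\,((B^{-1})^T)^T : X\in\Z^{n\times n}\}=\{B\,X\,B^{-1} : X\in\Z^{n\times n}\}$. Choosing the integer matrix $X=I$ produces the matrix $B\,I\,B^{-1}=I$, so the nonzero lattice vector corresponding to the identity matrix lies in $\calL\otimes\calL^*$. By the trace formula~\eqref{eq:normastrace} its $\ell_2$ norm equals $\sqrt{\trace(I\,I^T)}=\sqrt n$, which gives $\lambda_1(\calL\otimes\calL^*)\le\sqrt n$ no matter which self-dual $\calL$ we took.

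For the second part, $\lambda_1(\calL)=\lambda_1(\calL^*)$ is immediate from $\calL=\calL^*$, so I only need one self-dual lattice with $\lambda_1(\calL)=\Omega(\sqrt n)$. Such lattices exist by a Minkowski--Hlawka-type averaging argument over the space of unimodular (self-dual) lattices, and I would simply quote this from~\cite{MilHus73}. Observe moreover that, since a self-dual lattice has determinant $1$, Minkowski's First Theorem (\expref{Theorem}{thm:mink_first}) already forces $\lambda_1(\calL)\le\sqrt n$; hence this existence result is optimal up to the hidden constant, and the two displayed quantities $\lambda_1(\calL\otimes\calL^*)$ and $\lambda_1(\calL)\cdot\lambda_1(\calL^*)$ differ by a factor of $\Omega(\sqrt n)$.

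The only point requiring care is the transpose bookkeeping in the matrix form of the tensor product, so that the cancellation $B\,X\,B^{-1}=I$ genuinely comes out and the matrix $I$ is obtained from an \emph{integral} coefficient matrix $X$; there is no substantive obstacle, since the existence of good self-dual lattices is taken off the shelf. The instructive moral, and the reason the proof is worth recording, is that self-duality alone forces the short vector $I$ into $\calL\otimes\calL^*$, so that $\lambda_1$ can collapse from $\Theta(n)$ (the product of the two factors' minima) down to $\sqrt n$ after a single tensoring.
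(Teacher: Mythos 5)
Your proof is correct and follows essentially the same route as the paper: both exhibit the identity matrix $I_n = B\,I_n\,B^{-1}$ as a vector of $\calL\otimes\calL^*$ of norm $\sqrt n$ (valid for any full-rank lattice, not just self-dual ones), and both invoke the Conway--Thompson existence of self-dual lattices with $\lambda_1=\Omega(\sqrt n)$ as a black box from~\cite{MilHus73}. Your additional remark that Minkowski's theorem makes the $\Omega(\sqrt n)$ bound tight is a nice aside but does not change the argument.
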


\begin{proof}
We first show that for any full-rank $n$-dimensional lattice $\calL$,
$\lambda_1(\calL \otimes \calL^*) \le \sqrt{n}$. Let $\calL$ be a lattice
generated by a basis $B=(b_1,\ldots,b_n)$. Let
$(B^{-1})^T=(\tilde{b_1},\ldots,\tilde{b_n})$ be the basis generating its
dual lattice $\calL^{*}$. Now consider the vector $\sum_{i=1}^{n}{b_i \otimes
\tilde{b_i}} \in \calL \otimes \calL^*$. Using our matrix notation, this vector
can be written as
$$B \,I_n ((B^{-1})^T)^T = B \,B^{-1} = I_n\,,$$
and clearly has $\ell_2$ norm $\sqrt{n}$.
To complete the proof, we need to use the (non-trivial) fact that for any
$n \ge 1$ there exists a full-rank, $n$-dimensional and self-dual lattice with
shortest nonzero vector of norm $\Omega(\sqrt{n})$. This fact is due to Conway
and Thompson; see~\cite[Page 46]{MilHus73} for details.
\end{proof}

\section{Proof of results}\label{sec:proof}

The following is our main technical result. As we will show later, \expref{Theorem}{thm:Intro} follows easily by plugging in appropriate values of $k$.

\begin{theorem}\label{thm:IntroGeneral}
For any $1 \leq p \leq \infty$ there exist $c,C>0$ such that the following holds.
There exists a randomized reduction that takes as input a $\SAT$ instance and an integer $k \ge 1$ and outputs
a $\GapSVPnorm{p}{\gamma}$ instance of dimension $n^{Ck}$ with gap $\gamma = 2^{-ck}$, where $n$ denotes the size of the $\SAT$ instance.
The reduction runs in time polynomial in $n^{Ck}$ and has two-sided error, namely,
given a $\YES$ (resp., $\NO$) instance it outputs a $\YES$ (resp., $\NO$) instance with probability $9/10$.
\end{theorem}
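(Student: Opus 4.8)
The plan is to separate a ``base'' hardness result from a boosting step, putting all the difficulty into showing that the boosting step loses nothing. First I would invoke (a minor modification of) Khot's construction~\cite{Khot05svp}, summarized for completeness in \expref{Section}{appendix:Khot}: it is a randomized polynomial-time reduction mapping a $\SAT$ instance of size $n$ to a $\GapSVPnorm{2}{\gamma_0}$ instance $(B,s)$ of dimension $N=\poly(n)$, for some absolute constant $\gamma_0<1$, which fails with probability at most $1/10$. Beyond the gap, what I would extract from that construction is a structural ``code-like'' property $\mathcal{P}$ of the lattice $\calL(B)$ --- inherited from the underlying BCH code and the random restriction --- that holds in every instance, together with $\lambda_1^{(2)}(\calL(B))\le\gamma_0 s$ in $\YES$ instances and $\lambda_1^{(2)}(\calL(B))>s$ in $\NO$ instances. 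Given such $(B,s)$ the reduction outputs $(B^{\otimes k},\,s^k)$, where $B^{\otimes k}$ is the $k$-fold tensor product; this is computable in time $\poly(N^k)=\poly(n^{O(k)})$, and since tensoring is deterministic the failure probability stays at most $1/10$. The output dimension is $N^k=n^{O(k)}$, and taking $c=\log_2(1/\gamma_0)$ makes the gap $\gamma_0^k=2^{-ck}$; a final application of the norm-embedding reduction of~\cite{RegevR06} transfers everything from $\ell_2$ to an arbitrary $\ell_p$ with $1\le p\le\infty$, at the cost of only a polynomial blow-up in the dimension and a constant-factor change in $c$ and $C$. (For the record, when $p=2$ one can also argue directly, since short vectors in Khot's lattices behave like scaled code words, on which $\|\cdot\|_p^p$ agrees with $\|\cdot\|_2^2$, and this feature survives tensoring.)

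The $\YES$ case of correctness is immediate from~\eqref{eq:TensorLeq}: $\lambda_1^{(2)}(\calL(B))\le\gamma_0 s$ gives $\lambda_1^{(2)}(\calL(B)^{\otimes k})\le(\gamma_0 s)^k$, a $\YES$ instance with threshold $s^k$ and gap $\gamma_0^k$. All the work is in the $\NO$ case, where I need the following tensor lemma: \emph{if $\calL_1$ and $\calL_2$ both satisfy $\mathcal{P}$, then $\calL_1\otimes\calL_2$ satisfies $\mathcal{P}$ and $\lambda_1^{(2)}(\calL_1\otimes\calL_2)\ge\lambda_1^{(2)}(\calL_1)\cdot\lambda_1^{(2)}(\calL_2)$.} The preservation clause is what lets me iterate: applying the lemma $k-1$ times to $\calL(B)$ yields $\lambda_1^{(2)}(\calL(B)^{\otimes k})\ge(\lambda_1^{(2)}(\calL(B)))^k>s^k$, so $(B^{\otimes k},s^k)$ is a $\NO$ instance. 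I should stress that some structural hypothesis is genuinely necessary: \expref{Lemma}{lem:tensor} (Steinberg's example) exhibits lattices for which $\lambda_1$ drops \emph{strictly} under tensoring, so the tensor lemma must exploit the special form of Khot's instances and cannot follow from a soft argument.

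The proof of the tensor lemma is the heart of the matter and the step I expect to be the main obstacle. I would use the matrix picture: a vector $w\in\calL_1\otimes\calL_2$ corresponds to $W=B_1 X B_2^T$ with $X\in\Z^{m_1\times m_2}$, so that every column of $W$ is a vector of $\calL_1$, every row is (the transpose of) a vector of $\calL_2$, and, writing $G_i=B_i^T B_i$ for the Gram matrices, $\|w\|_2^2=\trace(W W^T)=\trace(X G_2 X^T G_1)$ by~\eqref{eq:normastrace}. The naive estimate only observes that some column is a nonzero vector of $\calL_1$, giving $\|W\|_2\ge\lambda_1^{(2)}(\calL_1)$ --- far short of the product --- since, unlike for the tensor product of codes, a long column does not force the remaining rows and columns to be long. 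To recover the full product I would feed the row/column structure of $W$ together with property $\mathcal{P}$ of $\calL_1$ and $\calL_2$ into the matrix trace inequality of de Shalit and Parzanchevski~\cite{DeShalit06}, which lower-bounds $\trace(X G_2 X^T G_1)$ over nonzero integer $X$ in terms of the geometry of the two lattices: for general lattices this quantity can be much smaller than $(\lambda_1^{(2)}(\calL_1))^2(\lambda_1^{(2)}(\calL_2))^2$ --- precisely the phenomenon in \expref{Lemma}{lem:tensor} --- whereas under $\mathcal{P}$ it is at least that product. The delicate points, where I expect most of the effort to go, are pinning down the exact form of $\mathcal{P}$ produced by Khot's construction, checking that it is exactly the hypothesis needed to run the de Shalit--Parzanchevski inequality, and verifying that $\mathcal{P}$ is closed under $\otimes$ so that the induction on $k$ closes. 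With the tensor lemma in hand, \expref{Theorem}{thm:IntroGeneral} follows from the parameter bookkeeping above.
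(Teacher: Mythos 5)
Your plan uses the right tools---Khot's code-like base instances, the $k$-fold tensor product, and the trace inequality from de Shalit--Parzanchevski---and the $\YES$ direction and parameter bookkeeping are correct. The gap is in the \emph{formulation} of the tensor lemma. You ask for a symmetric statement: both $\calL_1$ and $\calL_2$ satisfy some structural property $\mathcal{P}$, the bound $\lambda_1(\calL_1 \otimes \calL_2) \ge \lambda_1(\calL_1)\lambda_1(\calL_2)$ holds, and, so that the induction closes, $\calL_1 \otimes \calL_2$ again satisfies $\mathcal{P}$. This closure clause is the problem, and you yourself flag it as the step you expect to be hard. The property Khot's construction actually delivers (\expref{Theorem}{thm:Khot}: every nonzero vector either has $\ge d$ nonzero coordinates, or is all-even with $\ge d/4$ nonzero coordinates, or is all-even with $\ell_2$ norm $\ge d$) is tied to a fixed integer $d$, to coordinate-wise parity, and to a fixed ambient $\Z^n$. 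None of this tensorizes cleanly: under $\otimes$ the relevant scale becomes $d^2$, ``even'' turns into ``divisible by 4'' for simple tensors but a general vector $B_1 X B_2^T$ is not a simple tensor, the dimension squares, and no argument is given (or readily available) that the resulting lattice has $\mathcal{P}$ at any scale. Without closure, your induction produces only the base step.

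The paper's \expref{Lemma}{lem:NOinstances} is deliberately \emph{asymmetric}, and that asymmetry is what makes the induction free: if $\calL_1$ is a $\NO$ instance of Khot's variant and $\calL_2$ is an \emph{arbitrary} lattice, then $\lambda_1(\calL_1 \otimes \calL_2) \ge \sqrt{d}\cdot\lambda_1(\calL_2)$. Applying this $k-1$ times with $\calL_1=\calL(B)$ held fixed and $\calL_2$ the running tensor power gives $\lambda_1(\calL(B)^{\otimes k})\ge d^{k/2}$ directly; no structural hypothesis on the tensor power is ever needed. The proof writes a tensor vector as $B_1'(B_2')^T$ with $\calL_1'\subseteq\calL_1$, $\calL_2'\subseteq\calL_2$ full-rank sublattices, and \expref{Claim}{claim:sublattice} shows every nontrivial sublattice of a Khot $\NO$ instance either forces $\ge d$ (resp.\ $\ge d/4$ even) nonzero rows in every basis---in which case one just counts rows of $B_1'(B_2')^T$, each a vector of $\calL_2'$---or has $\det(\calL_1')\ge d^{r/2}$, in which case \expref{Claim}{claim:DeShalitType} (trace/AM--GM) plus Minkowski's bound on $\det(\calL_2')$ finish. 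So you are right that the de Shalit--Parzanchevski inequality enters, but it enters only in one of three cases, only through determinants, and only on the Khot side; the other factor stays unconstrained. Reformulating your lemma asymmetrically in this way eliminates the closure-under-tensor obstacle and is exactly what makes the proof close.
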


In fact, we will only need to prove this theorem for the case $p=2$ since, as is easy to see, the general case follows from the following theorem (applied with, say, $\eps=1/2$).\footnote{We note that our results can be shown directly for any $1 <p < \infty$ without using \expref{Theorem}{thm:RosenRegev} by
essentially the same proof.}

\begin{theorem}[\cite{RegevR06}]\label{thm:RosenRegev}
For any $\eps >0$, $\gamma < 1$ and $1 \leq p \leq \infty$ there
exists a randomized polynomial-time reduction from
$\GapSVPnorm{2}{\gamma'}$ to $\GapSVPnorm{p}{\gamma}$, where $\gamma'
= (1-\eps)\gamma$.
\end{theorem}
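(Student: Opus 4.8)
The plan is to prove the theorem only for $1 \le p < \infty$ --- the case $p=\infty$ is not needed in this paper, a stronger hardness result being already available there~\cite{Dinur00} --- and to realize the reduction by a random linear embedding of $\ell_2$ into $\ell_p$. Given a $\GapSVPnorm{2}{\gamma'}$ instance $(B,s)$ with $B \in \Qset^{d \times m}$ and bit-size $N$, I would sample a matrix $A \in \Rset^{n \times d}$ with i.i.d.\ $N(0,\sigma^2)$ entries, where $n = \Theta(d^2)$ and $\sigma = (n c_p)^{-1/p}$ with $c_p := \mathbb{E}\,|g|^p$ for a standard Gaussian $g$, so that $\mathbb{E}\,\|Ax\|_p^p = \|x\|_2^p$ for every $x \in \Rset^d$; then round $A$ to $\poly(N)$ bits of precision to obtain $\tilde A \in \Qset^{n\times d}$ and output the $\GapSVPnorm{p}{\gamma}$ instance $(\tilde A B,\ (1-\eps')s)$ for a suitably small constant $\eps' = \eps'(\eps) > 0$. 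Since $n > d$, the map $\tilde A$ is injective with overwhelming probability, so $\tilde A B$ is a genuine lattice basis and $\calL(\tilde A B) = \tilde A\,\calL(B)$.

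Correctness reduces to two one-sided estimates for $x \mapsto \|Ax\|_p$. For completeness, a $\YES$ instance contains a nonzero $v \in \calL(B)$ with $\|v\|_2 \le \gamma' s$, and $\|Av\|_p^p = \frac{\|v\|_2^p}{n c_p}\sum_{i=1}^{n}|g_i|^p$ is an average of $n$ i.i.d.\ nonnegative terms with mean $\|v\|_2^p$, so Chebyshev's inequality --- needed here for a single vector only --- gives $\|Av\|_p \le (1+\eps')\|v\|_2$ with probability $\ge 9/10$, whence $\|Av\|_p \le (1+\eps')\gamma' s \le \gamma(1-\eps')s$ as soon as $\eps'(2-\eps) \le \eps$ (recall $\gamma' = (1-\eps)\gamma$), so the output is a $\YES$ instance. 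For soundness I need the reverse inequality \emph{uniformly}: $\|Ax\|_p \ge (1-\eps')\|x\|_2$ for all $x \in \Rset^d$, since then every nonzero $v \in \calL(B)$, which satisfies $\|v\|_2 > s$ in a $\NO$ instance, maps to a vector of $\ell_p$ norm $> (1-\eps')s$. This uniform bound is scale-invariant, so I would establish it on the sphere $S^{d-1}$ by a net argument: for a fixed unit $x$ the lower tail $\Pr\bigl[\|Ax\|_p^p < (1-\eps'')^p\bigr]$ is $e^{-\Omega(n)}$ by a Chernoff/Cram\'er bound (the lower tail of a sum of nonnegative i.i.d.\ variables always has a positive exponential rate, with no integrability obstruction), and to pass from a $\delta$-net of size $e^{O(d\log(1/\delta))}$ to the whole sphere I use that the operator norm $\|A\|_{2 \to p}$ is at most $\poly(n)$ with overwhelming probability (indeed $O(1)$ when $p \le 2$); choosing $\delta = 1/\poly(n)$, the net has $e^{O(d\log n)}$ points, which the choice $n = \Theta(d^2)$ beats.

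Finally I would settle the bookkeeping. Replacing $A$ by $\tilde A$ perturbs $\|Av\|_p$ on the relevant lattice vectors by at most a $(1 \pm 2^{-\poly(N)})$ factor --- using that the entries of $B$ are at most $2^{\poly(N)}$ in absolute value, so that $\lambda_1^{(2)}(\calL(B))$ and the norms of the $\YES$-witnesses lie in $[2^{-\poly(N)}, 2^{\poly(N)}]$ --- hence the inequalities above survive after a negligible loss in the constants; each bad event (the completeness concentration, each net point, the operator-norm estimate, loss of injectivity) fails with probability $1/10$ or $e^{-\Omega(n)}$, so a union bound yields overall success probability $\ge 2/3$, and the reduction runs in $\poly(N)$ time and outputs an instance of size $\poly(N)$. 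I expect the crux to be the soundness direction --- upgrading a per-vector lower-tail bound to one valid simultaneously over all (infinitely many, unbounded-length) lattice vectors. Two features make this work: the bound is scale-invariant, so a net on the unit sphere suffices, and one needs a strong bound only \emph{below} uniformly while the bound \emph{above} is needed for a single vector. This asymmetry is exactly what rescues the argument when $p > 2$, where a Gaussian embedding genuinely does \emph{not} achieve $\|A\|_{2 \to p} = O(1)$ (and $\ell_2$ admits no low-distortion embedding into a low-dimensional $\ell_p$), yet a crude $\poly(n)$ bound on $\|A\|_{2 \to p}$ is all the net argument requires.
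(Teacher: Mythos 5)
The paper does not prove this statement at all---it is imported verbatim from~\cite{RegevR06} and used as a black box (specifically, to lift the $p=2$ case of the main reduction to general $p$). So there is no internal proof to compare against; I will instead assess your reconstruction on its own terms and against the approach of the cited reference.

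Your proposal is, in essence, the argument of~\cite{RegevR06}: embed $\ell_2$ into $\ell_p$ by a random Gaussian linear map, exploit the asymmetry that completeness needs an \emph{upper} bound on $\|Av\|_p$ for a single witness (second-moment concentration suffices) while soundness needs a \emph{lower} bound uniformly over all of $\linspan(\calL)$, and obtain the latter from a Cram\'er-type lower-tail bound plus a net. Your observation that the lower tail of $\frac1n\sum|g_i|^p$ always has a positive exponential rate because $\mathbb{E}\,e^{-t|g|^p}\le 1$ is exactly the integrability point that makes the one-sided bound go through even when the two-sided (Dvoretzky-type) embedding into $\ell_p^{\poly(d)}$ would be problematic, and the operator-norm / $\delta$-net bookkeeping is routine but correct: with $\delta=1/\poly(n)$ the net has size $e^{O(d\log n)}$, which $n=\Theta(d^2)$ (or even $\Theta(d\log d)$) comfortably beats. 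The Chebyshev step for completeness and the arithmetic $\eps'(2-\eps)\le\eps$ ensuring $(1+\eps')\gamma'\le(1-\eps')\gamma$ are also fine. The one genuine gap relative to the stated theorem is that you explicitly drop $p=\infty$; the theorem as quoted does include it, and the paper does formally invoke it when deducing Theorem~\ref{thm:IntroGeneral} for $p=\infty$ (even though that case of the final result is subsumed by~\cite{Dinur00}). A Gaussian matrix does not give $\ell_2\hookrightarrow\ell_\infty^{\poly(d)}$, so your construction really does not cover that endpoint, and~\cite{RegevR06} handle it by a separate device. You flag this omission candidly, so I would call it a conscious scoping decision rather than an error, but it should be recorded as a place where your proof establishes strictly less than the theorem claims.
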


\subsection{Basic \texorpdfstring{$\SVP$}{SVP}}

As already mentioned, our reduction is crucially based on a hardness result of
a variant of $\SVP$ stemming from Khot's work~\cite{Khot05svp}.
Instances of this variant have properties that make it possible to amplify
the gap using the tensor product.
The following theorem summarizes the hardness result on which our proof is based.
For a proof the reader is referred to \expref{Section}{appendix:Khot}.

\begin{theorem}[\cite{Khot05svp}]\label{thm:Khot}
There are a constant $\gamma <1$ and a polynomial-time randomized
reduction from $\SAT$ to $\SVP$ outputting a lattice basis $B$,
satisfying $\calL(B) \subseteq \Z^n$ for some integer $n$, and an
integer $d$, such that:
\begin{enumerate}
    \item For any $\YES$ instance of $\SAT$, with probability at least $9/10$, $\lambda_1 (\calL(B)) \leq \gamma \cdot \sqrt{d}$.
    \item For any $\NO$ instance of $\SAT$, with probability at least $9/10$, for every nonzero vector $v \in \calL(B)$,
\begin{itemize}
    \item $v$ has at least $d$ nonzero coordinates, or
    \item all coordinates of $v$ are even and at least $d/4$ of them are nonzero, or
    \item all coordinates of $v$ are even and $\|v\|_2 \geq d$.
\end{itemize}
In particular, $\lambda_1 (\calL(B)) \geq \sqrt{d}$.
\end{enumerate}
\end{theorem}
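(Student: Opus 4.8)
The plan is to reconstruct Khot's two‑stage reduction. The first stage is purely combinatorial: reduce $\SAT$ to a gap version of $\CVP$ that is $\NP$‑hard to approximate to within \emph{some} absolute constant and in which $\YES$ instances admit a $0/1$ solution (the target equals a $\{0,1\}$‑combination of the given vectors); such a variant follows from standard PCP machinery, e.g.\ the Arora--Babai--Stern--Sweedyk reduction. The second, lattice‑specific, stage \emph{homogenizes} such a $\CVP$ instance into an $\SVP$ instance: one glues the $\CVP$ lattice to a gadget lattice built from a BCH code — roughly, $\calL'$ consists of vectors pairing a $\{0,1\}$‑combination of the $\CVP$ vectors minus an integer multiple of the target with a vector of the gadget, with a heavy weight forcing that target‑multiple to be $\pm1$ in any short vector — and then intersects $\calL'$ with a \emph{random} sublattice cut out by a few linear congruences modulo an odd prime $q$. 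Every step keeps the lattice integral, so $\calL(B)\subseteq\Z^{n}$ and polynomial running time are automatic.

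The heart of the construction is the BCH gadget. Fix a narrow‑sense BCH code $\calC\subseteq\{0,1\}^{h}$ of length $h$, minimum distance at least $d$, and large dual distance — all achievable simultaneously with $\dim\calC=\Theta(h)$ via the Carlitz--Uchiyama bound, for an appropriate parameter $d$ — and consider $\calL_{\calC}=\{x\in\Z^{h}:(x\bmod 2)\in\calC\}$. Its decisive \emph{rigidity} is that any $x\in\calL_{\calC}$ with at least one odd coordinate has at least $d$ odd — hence at least $d$ nonzero — coordinates, while every other vector of $\calL_{\calC}$ lies in $2\Z^{h}$; this already gives the skeleton of the theorem's trichotomy. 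One insists on \emph{BCH} codes rather than arbitrary distance‑$d$ codes for two reasons: the large dual distance lets one embed the first‑stage $\CVP$ instance while keeping the number of \emph{low‑weight} vectors of the glued lattice small, and the rich combinatorial structure of $\calC$ supplies an exponentially large family of candidate short vectors in $\YES$ instances.

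With the gadget in hand I would set $\calL(B)=\calL'\cap\{x:Rx\equiv 0\pmod q\}$ for a random $\Z_{q}$‑matrix $R$ with a suitable number $a$ of rows, and split into cases. In a $\NO$ instance every nonzero $v\in\calL(B)$ inherits the rigidity of $\calL_{\calC}$: either it has at least $d$ nonzero coordinates (first alternative), or $v=2w$ is all‑even, in which case one must kill off all vectors $2w$ with fewer than $d/4$ nonzero coordinates \emph{and} $\|2w\|_{2}<d$, of which there are at most some number $M$; since each fixed such vector satisfies the random congruences with probability $q^{-a}$, choosing $q^{a}$ sufficiently larger than $M$ makes them all disappear with probability at least $9/10$, leaving only vectors obeying the second or third alternative (so in particular $\lambda_{1}\ge\sqrt d$). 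In a $\YES$ instance, a satisfying assignment together with the structure of the gadget produces a family of $N$ distinct candidate vectors of $\ell_{2}$‑norm at most $\gamma\sqrt d$; the expected number of these surviving the restriction is $Nq^{-a}$, and a second‑moment estimate — using that the random linear forms act like pairwise‑independent tests on a suitably spread‑out candidate set — shows at least one survives with probability at least $9/10$ provided $N$ is sufficiently larger than $q^{a}$.

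The principal obstacle is to reconcile the two requirements $q^{a}\gg M$ (for $\NO$) and $N\gg q^{a}$ (for $\YES$) under a \emph{single} choice of parameters, i.e.\ to guarantee $M\ll q^{a}\ll N$. This is the quantitative core of the argument, and it is precisely here that the fine properties of BCH codes — large dual distance, a controlled weight distribution, and an abundance of (near‑)minimum‑weight words — are indispensable, together with a careful weighting of the homogenization coordinate(s) so that the output gap constant $\gamma<1$ is preserved while the bound $M$ on spurious short vectors stays small. A secondary, largely bookkeeping, issue is the ``all coordinates even'' alternative: taking $q$ odd leaves the mod‑$2$ structure untouched by the restriction, and one verifies that the third alternative ($v$ even with $\|v\|_{2}\ge d$) exactly accounts for the even vectors not already captured by the ``at least $d/4$ nonzero coordinates'' alternative. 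Once the parameters balance, Theorem~\ref{thm:Khot} follows; the complete details are carried out in Section~\ref{appendix:Khot}.
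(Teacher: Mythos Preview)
Your outline is essentially the paper's proof, and the key parameter‑balancing inequality $M\ll q^{a}\ll N$ is exactly the quantitative heart of Step~3. A few points where your sketch drifts from what the paper actually does: (i) the starting problem is a gap version of \emph{Exact Set Cover} (Bellare et al.), not the ABSS $\CVP$ reduction, and this yields the $\{0,1\}$ structure for free; (ii) the paper deliberately \emph{avoids} the ``heavy weight'' homogenization coordinate you propose---the comparison with Khot in Section~\ref{appendix:Khot} notes that working directly with the restricted lattice $\{y:Sy=0\}$ makes the third alternative of the $\NO$ case cleaner; (iii) only a \emph{single} random linear form modulo a prime $q$ is used (so $a=1$), with $q$ chosen in the interval $[100A,G/100]$; and (iv) the BCH code is invoked only for its minimum distance $\ge d$ and its high rate ($2^{N-h}$ codewords), not for dual distance or Carlitz--Uchiyama---the abundance of short candidates in the $\YES$ case comes from a simple averaging argument over a random shift $s$ (\expref{Lemma}{lemma:L_BCH}), and the bound on annoying vectors in the $\NO$ case is a direct count of even vectors with few small coordinates, so no fine weight‑distribution information is needed. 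None of these differences is fatal to your plan, but following the paper's choices will make the bookkeeping noticeably lighter.
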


\subsection{Boosting the \texorpdfstring{$\SVP$}{SVP} hardness factor}

As mentioned before, we boost the hardness factor using the tensor
product of lattices. For a lattice $\calL$ we denote by $\calL^{\otimes k}$ the
$k$-fold tensor product of $\calL$. An immediate corollary of inequality
\eqref{eq:TensorLeq} is that if $(B,d)$ is a $\YES$ instance of the $\SVP$
variant in \expref{Theorem}{thm:Khot}, and $\calL=\calL(B)$, then
\begin{eqnarray}\label{eq:YESinstances}
    \lambda_1(\calL^{\otimes k}) \leq \gamma ^k d^{k/2}.
\end{eqnarray}
For the case in which $(B,d)$ is a $\NO$ instance we will show that any nonzero
vector of $\calL^{\otimes k}$ has norm at least $d^{k/2}$, \ie,
\begin{eqnarray}\label{eq:NOinstances}
    \lambda_1(\calL^{\otimes k}) \geq d^{k/2}.
\end{eqnarray}
This yields a gap of $\gamma^k$ between the two cases. Inequality
\eqref{eq:NOinstances} easily follows by induction from the central
lemma below, which shows that $\NO$ instances ``tensor nicely.''

\begin{lemma}\label{lem:NOinstances}
Let $(B,d)$ be a $\NO$ instance of the $\SVP$ variant given in
\expref{Theorem}{thm:Khot}, and denote by $\calL_1$ the lattice generated by the
basis $B$. Then for any lattice $\calL_2$,
$$\lambda_1(\calL_1 \otimes \calL_2) \geq \sqrt{d} \cdot \lambda_1(\calL_2)\,.$$
\end{lemma}

The proof of this lemma is based on some properties of sublattices of $\NO$
instances which are established in the following claim.

\begin{claim}\label{claim:sublattice}
Let $(B,d)$ be a $\NO$ instance of the $\SVP$ variant given in
\expref{Theorem}{thm:Khot}, and let $\calL \subseteq \calL(B)$ be a sublattice of rank $r>0$. Then at least one of the following properties holds:
\begin{enumerate}
    \item Every basis matrix of $\calL$ has at least $d$ nonzero rows (\ie, rows that are not all zero).
    \item Every basis matrix of $\calL$ contains only even entries and has at least $d/4$ nonzero rows.
    \item $\det(\calL) \geq d^{r/2}$.
\end{enumerate}
\end{claim}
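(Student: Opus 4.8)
The plan is to relate the three structural alternatives in Theorem~\ref{thm:Khot} (about individual nonzero vectors of $\calL(B)$) to the three alternatives in the claim (about \emph{bases} of a sublattice $\calL$), and to handle the gap between ``some nonzero vector is bad'' and ``every basis is bad'' by exploiting the fact that a basis can be modified by unimodular transformations. First I would fix a basis matrix $S$ of $\calL$, whose columns $s_1,\dots,s_r$ lie in $\calL(B)\subseteq\Z^n$. Each $s_i$ is a nonzero lattice vector, so by Theorem~\ref{thm:Khot} it falls into one of the three types: (a) at least $d$ nonzero coordinates; (b) all coordinates even with at least $d/4$ nonzero; or (c) all coordinates even with $\|s_i\|_2\ge d$. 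The key observation is that types (b) and (c) are \emph{closed under integer linear combinations} — if every generator has all-even coordinates, so does every vector of $\calL$ — whereas type (a) is not stable. So the natural case split is on whether \emph{some} basis (equivalently, every vector of $\calL$, since the ``all even'' property is basis-independent) has the all-even property.

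If $\calL$ is \emph{not} contained in $(2\Z)^n$, then some vector of $\calL$ — hence, after a unimodular change of basis placing that vector as a column, every basis — has a vector of type (a); but I actually want \emph{every} basis to have many nonzero rows. Here I would argue that a row of a basis matrix $S$ is zero iff the corresponding coordinate vanishes on all of $\calL$, which is a property of $\calL$ itself, not of the chosen basis; so ``number of nonzero rows'' is an invariant of $\calL$, call it $t$. If $t<d$ I must derive a contradiction or land in case 2 or 3. Restricting to the $t$ nonzero coordinates gives a rank-$r$ sublattice of $\Z^t$, and if additionally not all entries are even, Theorem~\ref{thm:Khot} applied to any nonzero $v\in\calL$ forces $v$ to have $\ge d$ nonzero coordinates — impossible if $t<d$ and $v$ is not all-even. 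Hence if $t<d$ then every vector of $\calL$ is all-even; this is exactly the all-even regime, so we are reduced to analyzing that case, where I want to reach case 2 ($t\ge d/4$) or case 3 ($\det(\calL)\ge d^{r/2}$).

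In the all-even regime, consider $v\in\calL$ nonzero: it is even and either has $\ge d/4$ nonzero coordinates or $\|v\|_2\ge d$. If the first alternative holds for \emph{some} nonzero $v$, then (since nonzero rows is an invariant $t$) we get $t\ge d/4$, giving case 2. Otherwise \emph{every} nonzero $v\in\calL$ has $\|v\|_2\ge d$, i.e.\ $\lambda_1(\calL)\ge d$; then Minkowski's First Theorem (Theorem~\ref{thm:mink_first}) gives $\det(\calL)\ge(\lambda_1(\calL)/\sqrt r)^r\ge(d/\sqrt r)^r$. This is not yet $d^{r/2}$, so I would instead use the \emph{second} successive minimum or a sharper bound: note $\calL\subseteq(2\Z)^n$, and more importantly one should bound $\det$ using that \emph{all} successive minima are large. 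Actually the cleanest route: since every nonzero vector is even, $\tfrac12\calL$ is an integer lattice, and any nonzero vector of $\tfrac12\calL$ still has norm $\ge d/2$ with $\ge d/4$ nonzero coords or norm $\ge d/2$; iterating cannot continue indefinitely because entries are integers, so after finitely many halvings we exit the all-even case and land in case 1 or 2 applied to the scaled lattice, then scale back — scaling by $2^{-1}$ multiplies $\det$ by $2^{-r}$ but also all the combinatorial thresholds behave well. The main obstacle I anticipate is precisely this determinant bound in the ``all nonzero vectors are long'' sub-case: getting the clean exponent $d^{r/2}$ (rather than $(d/\sqrt r)^r$) seems to require either a better lattice inequality than Minkowski's, or using that the lattice sits in $\Z^n$ with a short first minimum relative to $n$, or leveraging \emph{all} successive minima $\lambda_1,\dots,\lambda_r$ together with Minkowski's second theorem; I would expect the proof to invoke a specialized inequality there, and that is the step I would scrutinize most carefully when the author's proof appears.
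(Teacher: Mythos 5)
Your plan and case split are essentially the paper's: reduce to the all-even regime, conclude $\lambda_1(\calL)\ge d$ there, and apply Minkowski. The gap is entirely in the last paragraph, where you stop at $\det(\calL)\ge(d/\sqrt r)^r$ and declare it ``not yet $d^{r/2}$.'' But you have already established everything needed to close this: you showed the number of nonzero rows $t$ is less than $d$ (that is why you are in the all-even regime at all), and since a rank-$r$ matrix has at least $r$ nonzero rows, $r\le t<d$. Consequently $d/\sqrt r>d/\sqrt d=\sqrt d$, so
\[
\det(\calL)\ge\left(\frac{d}{\sqrt r}\right)^{r}\ge\left(\sqrt d\right)^{r}=d^{r/2}\,,
\]
which is exactly property~3. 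No second successive minimum, Minkowski's second theorem, lattice halving, or ``specialized inequality'' is needed; the sharpening comes from $r<d$, not from a stronger determinant bound. The paper's proof is the contrapositive phrasing of what you wrote, records $r<d$ explicitly at the moment property~1 is assumed to fail, and then the Minkowski step is immediate. In short: your argument is correct up to the final inequality, and the missing idea is simply noticing that the row-count bound you already proved forces $r<d$.
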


\begin{proof}
Assume that $\calL$ does not have either of the first two properties. Our goal is
to show that the third property holds. Since the first property does
not hold, we have $r<d$ and also that any vector in $\calL$ has fewer than $d$ nonzero
coordinates. By \expref{Theorem}{thm:Khot}, this implies that $\calL \subseteq 2\cdot \Z^n$. By the assumption that
the second property does not hold, there must exist a basis of $\calL$ that has fewer than
$d/4$ nonzero rows. Therefore, all nonzero vectors in $\calL$ have
fewer than $d/4$ nonzero coordinates, and hence have norm at least $d$, again by \expref{Theorem}{thm:Khot}.
We conclude that $\lambda_1(\calL) \ge d$, and by Minkowski's First Theorem
(\expref{Theorem}{thm:mink_first}) and $r < d$ we have
\[
\det(\calL) \geq \left(\frac{\lambda_1(\calL)}{\sqrt{r}}\right)^{r} \geq d^{r/2}.
\qedhere \]
\end{proof}

\begin{proof}[Proof of \expref{Lemma}{lem:NOinstances}]
Let $v$ be an arbitrary nonzero vector in $\calL_1 \otimes \calL_2$. Our goal is to
show that $\|v\|_2 \geq \sqrt{d} \cdot \lambda_1(\calL_2)$. We can write $v$ in
matrix notation as $B_1 X {B_2}^T$, where the integer matrix $B_1$ is a basis of $\calL_1$, $B_2$ is a basis of $\calL_2$,
and $X$ is an integer matrix of coefficients. Let
$U$ be a unimodular matrix for which $X = [H \; 0] U$, where $H$ is a matrix
with full column rank. Thus, the vector $v$ can be written as $B_1 [H \;
 0] (B_2 U^T)^T$. Since $U^T$ is also unimodular, the matrices $B_2$ and $B_2 U^T$ generate the same lattice. Now remove from $B_2 U^T$ the columns corresponding to the zero
columns in $[H \; 0]$ and denote the resulting matrix by $B'_2$. Furthermore, denote the
matrix $B_1 H$ by $B'_1$. Observe that both of the matrices $B'_1$ and $B'_2$ are
bases of the lattices they generate, \ie, they have full column rank. The vector
$v$ equals $B'_1 {B'_2}^T$, where $\calL'_1 := \calL(B'_1) \subseteq \calL_1$
and $\calL'_2 := \calL(B'_2) \subseteq \calL_2$.

\expref{Claim}{claim:sublattice} guarantees that the lattice $\calL'_1$ defined above has at least one of the three properties mentioned in the claim. We show that $\|v\|_2 \geq \sqrt{d} \cdot \lambda_1(\calL'_2)$ in each of these three cases. Then, by the fact that $\lambda_1(\calL'_2) \geq \lambda_1(\calL_2)$, the lemma will follow.
\begin{description}
    \item[Case 1:] Assume that at least $d$ of the rows in the basis matrix $B'_1$ are nonzero. Thus, at least $d$ of the rows of $B'_1 {B'_2}^T$ are nonzero lattice points from $\calL'_2$, and thus
    $$\|v\|_2 \geq \sqrt{d} \cdot \lambda_1(\calL'_2)\,.$$

    \item[Case 2:] Assume that the basis matrix $B'_1$ contains only even entries and has at least $d/4$ nonzero rows. Hence, at least $d/4$ of the rows of $B'_1 {B'_2}^T$ are even multiples of nonzero lattice vectors from $\calL'_2$. Therefore, every such row has $\ell_2$ norm at least $2 \cdot \lambda_1(\calL'_2)$, and it follows that
    $$\|v\|_2 \geq \sqrt{\frac{d}{4}} \cdot 2\cdot\lambda_1(\calL'_2) = \sqrt{d} \cdot \lambda_1(\calL'_2)\,.$$
\end{description}

The third case is based on the following central claim, which is similar to
Proposition 1.1 in~\cite{DeShalit06}. The proof is based on an elementary
matrix inequality relating the trace and the determinant of a symmetric
positive semidefinite matrix (see, \eg, \cite[Page 47]{bhatia:1997}).

\begin{claim}\label{claim:DeShalitType}
Let $\calL_1$ and $\calL_2$ be two rank-$r$ lattices generated by the bases
$U=(u_1, \ldots, u_r)$ and $W=(w_1, \ldots, w_r)$ respectively. Consider
the vector $v = \sum_{i=1}^{r}{u_i \otimes w_i}$ in $\calL_1 \otimes \calL_2$,
which can be written as $U I_r W^T = UW^T$ in matrix notation. Then,
$$\|v\|_2 \geq \sqrt{r} \cdot \bigl(\det(\calL_1) \cdot \det(\calL_2)\bigr)^{1/r}\,.$$
\end{claim}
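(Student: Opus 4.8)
The plan is to reduce the claim to the arithmetic--geometric mean inequality applied to the eigenvalues of a product of two Gram matrices. First I would pass to matrix notation. Write $v = UW^T$ as an $n_1 \times n_2$ matrix (here $U$ has columns $u_1,\dots,u_r$ and $W$ has columns $w_1,\dots,w_r$). By~\eqref{eq:normastrace} and the cyclic invariance of the trace,
$$\|v\|_2^2 = \trace\bigl((UW^T)(UW^T)^T\bigr) = \trace(U W^T W U^T) = \trace\bigl((U^T U)(W^T W)\bigr)\,.$$
Set $A = U^T U$ and $B = W^T W$; these are the Gram matrices of $\calL_1$ and $\calL_2$, hence $r\times r$ symmetric and positive definite (positive definiteness because $U$ and $W$ have full column rank), and by definition $\det(\calL_1) = \sqrt{\det A}$ and $\det(\calL_2) = \sqrt{\det B}$. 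So it suffices to prove the matrix inequality $\trace(AB) \ge r \cdot (\det A \cdot \det B)^{1/r}$: taking square roots then gives exactly $\|v\|_2 \ge \sqrt{r}\,(\det(\calL_1)\det(\calL_2))^{1/r}$, since $\det A \cdot \det B = (\det(\calL_1)\det(\calL_2))^2$.

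To prove this inequality I would observe that although $AB$ need not be symmetric, it is similar to the symmetric positive semidefinite matrix $A^{1/2} B A^{1/2}$ (conjugate by $A^{1/2}$). Hence $AB$ has $r$ real nonnegative eigenvalues $\mu_1,\dots,\mu_r$, with $\sum_i \mu_i = \trace(AB)$ and $\prod_i \mu_i = \det(AB) = \det A \cdot \det B$. The desired inequality is then precisely AM--GM, $\frac1r \sum_i \mu_i \ge \bigl(\prod_i \mu_i\bigr)^{1/r}$ --- equivalently, the elementary fact that an $r\times r$ positive semidefinite matrix $M$ satisfies $\trace(M) \ge r\,\det(M)^{1/r}$, applied here to $M = A^{1/2} B A^{1/2}$.

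There is essentially no serious obstacle. The only point that needs a moment of care is justifying that $\trace(AB)$ and $\det(AB)$ are the sum and product of $r$ nonnegative reals even though $AB$ itself is not symmetric; this is handled by the similarity transformation above. (Alternatively, one can write $A = P^T P$ with $P$ invertible and reduce directly to $\trace(P B P^T) \ge r\,\det(P B P^T)^{1/r}$, where $P B P^T$ is genuinely symmetric positive semidefinite.) Everything else is bookkeeping with the identity $\det(\calL_i) = \sqrt{\det(\text{Gram})}$ and with square roots.
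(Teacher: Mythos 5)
Your proof is correct and follows essentially the same route as the paper's: pass to Gram matrices, express $\|v\|_2^2$ as $\trace(G_1 G_2)$, conjugate by a square root to obtain a symmetric positive definite matrix, and apply the AM--GM inequality to its eigenvalues. The only cosmetic difference is which Gram matrix you conjugate by (the paper uses $G_2^{1/2} G_1 G_2^{1/2}$ while you use $A^{1/2} B A^{1/2}$), which is immaterial.
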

\begin{proof}
Define the two $r\times r$ symmetric positive definite matrices $G_1 = U^T U$
and $G_2 = W^T W$ (known as the Gram matrices of $U$ and $W$). By the fact that
$\trace(AB)=\trace(BA)$ for any matrices $A$ and $B$ and
by equation~\eqref{eq:normastrace},
$$
\|v\|_2^2 =  \trace\bigl((UW^T)(UW^T)^T\bigr) = \trace(G_1 G_2) = \trace\bigl(G_1 G_2^{1/2}
G_2^{1/2} \bigr) = \trace\bigl(G_2^{1/2} G_1 G_2^{1/2} \bigr)\,,$$
where $G_2^{1/2}$ is the positive square root of $G_2$.
The matrix $G=G_2^{1/2} G_1 G_2^{1/2}$ is also symmetric and positive definite,
and as such it has $r$ real and positive eigenvalues. We can thus apply
the inequality of arithmetic and geometric means on these eigenvalues to get
\[
\|v\|_2^2 = \trace(G) \geq r \det(G)^{1/r} = r \cdot \bigl(\det(G_1) \cdot
\det(G_2)\bigr)^{1/r}\,.
\]
Taking the square root of both sides of this equation completes the proof.
\end{proof}

Equipped with \expref{Claim}{claim:DeShalitType} we turn to deal with the third
case. In order to bound from below the norm of $v$, we apply
the claim to its matrix form $B'_1 {B'_2}^T$ with the lattices $\calL'_1$
and $\calL'_2$ as above.

\begin{description}
\item[Case 3:] Assume that the lattice $\calL'_1$ satisfies $\det(\calL'_1) \geq d^{r/2}$, where $r$ denotes its rank. Combining \expref{Claim}{claim:DeShalitType} and Minkowski's First Theorem we have that
$$\|v\|_2 \geq \sqrt{r} \cdot \bigl(\det(\calL'_1) \cdot \det(\calL'_2)\bigr)^{1/r} \geq \sqrt{r} \cdot (d^{r/2})^{1/r} \cdot \frac{\lambda_1(\calL'_2)}{\sqrt{r}} = \sqrt{d} \cdot \lambda_1(\calL'_2)\,,$$
and this completes the proof of the lemma.\qedhere
\end{description}
\end{proof}

\subsection{Proof of the main theorem}

\begin{proof}[Proof of \expref{Theorem}{thm:IntroGeneral}]
Recall that it suffices to prove the theorem for $p=2$.
Given a $\SAT$ instance of size $n$, we apply the reduction from \expref{Theorem}{thm:Khot} and obtain in time $\poly(n)$ a pair
$(B,d)$ where $B$ is a basis of a $\poly(n)$-dimensional lattice. We then output $(B^{\otimes k},d^{k})$,
where $B^{\otimes k}$ is the $k$-fold tensor product of $B$, \ie, a basis of
the lattice ${\calL(B)}^{\otimes k}$. The dimension of this lattice
is $\poly(n^k)$, and combining inequalities \eqref{eq:YESinstances} and
\eqref{eq:NOinstances} we infer a gap of $2^{-ck}$.
\end{proof}

\begin{proof}[Proof of \expref{Theorem}{thm:Intro}]
For \expref{item}{itm:intro_1}, choose $k$ to be a sufficiently large
constant and apply \expref{Theorem}{thm:IntroGeneral}. This shows that any constant factor
approximation algorithm to $\SVP$ implies a two-sided error algorithm for $\SAT$.
Using known self-reducibility properties of $\SAT$ (see, \eg,~\cite[Chapter 11]{papadi}), this also implies a one-sided error
polynomial-time algorithm for $\SAT$. For \expref{item}{itm:intro_2}, apply \expref{Theorem}{thm:IntroGeneral} with $k = (\log{n})^{1/\eps}$ (where $n$ is the size of the input $\SAT$ instance) and let
$N=n^{C k}$ be the dimension of the output lattice. Since
$$k = \left(\frac{\log{N}}{C}\right)^{\frac{1}{1+\eps}} > \left(\frac{\log{N}}{C}\right)^{1-\eps},$$
the gap we obtain as a function of the dimension $N$ is $2^{\Omega((\log{N})^{1-\eps})}$.
Therefore, an algorithm that approximates $\SVP$ better than this gap implies a
randomized $\SAT$ algorithm running in time $2^{\poly(\log{n})}$,
and hence the desired containment $\NP \subseteq \RTIME(2^{\poly(\log{n})})$.
\expref{Item}{itm:intro_3} follows similarly by applying \expref{Theorem}{thm:IntroGeneral} with $k=n^{\delta}$ for all $\delta>0$.
\end{proof}

\section{Proof of \texorpdfstring{\expref{Theorem}{thm:Khot}}{Theorem 3.3}}\label{appendix:Khot}

In this section we prove \expref{Theorem}{thm:Khot}.
The proof is essentially the same as the one
in~\cite{Khot05svp} with minor modifications.

\subsection{Comparison with Khot's theorem}

For the reader familiar with Khot's proof, we now describe how
\expref{Theorem}{thm:Khot} differs from the one in~\cite{Khot05svp}. First, our
theorem is only stated for the $\ell_2$ norm (since we use \expref{Theorem}{thm:RosenRegev}
to extend the result to other norms). Second, the $\YES$
instances of Khot had another property that we do not need here (namely, that
the coefficient vector of the short lattice vector is also short). Third, as a
result of the augmented tensor product, Khot's theorem includes an extra
parameter $k$ that specifies the number of times the lattice is supposed to be
tensored with itself. Since we do not use the augmented tensor product, we
simply fix $k$ to be some constant. In more detail, we choose the number of
columns in the BCH code to be $d^{O(1)}$, as opposed to $d^{O(k)}$. This
eventually leads to our improved hardness factor. Finally, the third
possibility in our $\NO$ case is different from the one in Khot's theorem (which
says that there exists a coordinate with absolute value at least $d^{O(k)}$).
We note that coordinates with huge values are used several times in Khot's
construction in order to effectively restrict a lattice to a subspace.
We instead work directly with the restricted lattice, making the reduction somewhat cleaner.

\subsection{The proof}

The proof of \expref{Theorem}{thm:Khot} proceeds in three steps. In the first, a
variant of the Exact Set Cover problem, which is known to be $\NP$-hard, is reduced
to a gap variant of $\CVP$. In the second step we construct a basis $B_{\tint}$
of a lattice which, informally, contains many short vectors in the $\YES$ case,
and few short vectors in the $\NO$ case. Finally, in the third step we complete
the reduction by taking a random sublattice.

\subsubsection{Step 1}
First, consider the following variant of Exact Set Cover. Let $\eta>0$ be an
arbitrarily small constant. An instance of the problem is a pair $(S,d)$, where
$S = \{S_1,\ldots,S_{n'}\}$ is a collection of subsets of some universe
$[n'']=\{1,\ldots,n''\}$, and $d$ is a positive integer.
In $\YES$ instances, there exists $S' \subseteq S$ of
size $\eta d$ that covers each element of the universe exactly once. In $\NO$
instances, there is no $S' \subseteq S$ of size less than $d$ that covers all
elements of the universe. This problem is known to be $\NP$-hard for an arbitrarily small
$0 < \eta < 1$ and for $n' = O(d)$~\cite{BellareEtAl93}. Moreover, it is easy to see that the problem remains $\NP$-hard
if we fix $\eta$ to be any negative power of $2$ and restrict $d$ to be a power of $2$. Thus, to prove \expref{Theorem}{thm:Khot}, it suffices to reduce from this problem.

In the first step we use a well-known reduction from the above
variant of Exact Set Cover to a variant of $\CVP$. For an instance
$(S,d)$ we identify $S$ with the $n'' \times n'$ matrix over
$\{0,1\}$ whose columns are the characteristic vectors of the sets
in $S$. The reduction outputs an instance $(B_{\CVP},t)$, where
$B_{\CVP}$ is a basis generating the lattice $\{y \in \Z^{n'} :
Sy=0\}$ and $t$ is some integer vector satisfying
$St=-(1,1,\ldots,1)$. (If no such $t$ exists, the
reduction outputs an arbitrary $\NO$ instance.)
We note that given $S$ the basis $B_{\CVP}$ can be
constructed in polynomial time (see, \eg,~\cite[Lemma~3.1]{MicSODA08}).

\begin{lemma}\label{lem:B_cvp}%
If $(S,d)$ is a $\YES$ instance of the above variant of Exact Set
Cover, then there is a lattice vector $z \in \calL(B_{\CVP})$ such
that $z-t$ is a $\{0,1\}$ vector and has exactly $\eta d$
coordinates equal to $1$. If $(S,d)$ is a $\NO$ instance, then for
any lattice vector $z \in \calL(B_{\CVP})$ and any nonzero integer
$j_0$, the vector $z+j_0 t$ has at least $d$ nonzero coordinates.
\end{lemma}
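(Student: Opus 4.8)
The plan is to work directly from the two defining properties of the reduction: by construction $\calL(B_{\CVP}) = \{y \in \Z^{n'} : Sy = 0\}$, so a vector $z \in \Z^{n'}$ lies in the lattice if and only if $Sz = 0$; and the target vector satisfies $St = -(1,1,\ldots,1)$. Everything else is bookkeeping with these two facts.

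For the $\YES$ case I would take the exact cover $S' \subseteq S$ of size $\eta d$ guaranteed by the hypothesis and let $x \in \{0,1\}^{n'}$ be its characteristic vector, so that $Sx = (1,1,\ldots,1)$ precisely because $S'$ covers each element exactly once. Then $z := x + t$ satisfies $Sz = Sx + St = 0$, hence $z \in \calL(B_{\CVP})$, while $z - t = x$ is a $\{0,1\}$ vector with exactly $|S'| = \eta d$ coordinates equal to $1$. That is the whole argument for the first half.

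For the $\NO$ case, fix an arbitrary $z \in \calL(B_{\CVP})$ and a nonzero integer $j_0$, and set $w := z + j_0 t$. Applying $S$ gives $Sw = Sz + j_0 St = -j_0(1,1,\ldots,1)$, so for every universe element $e \in [n'']$ the corresponding coordinate $(Sw)_e = \sum_{i : e \in S_i} w_i$ equals $-j_0 \neq 0$; in particular, the support $T := \{i : w_i \neq 0\}$ of $w$ must contain some index $i$ with $e \in S_i$. Therefore $\{S_i : i \in T\}$ is a cover of $[n'']$, and the $\NO$-instance property forces $|T| \geq d$, i.e., $w = z + j_0 t$ has at least $d$ nonzero coordinates.

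I do not expect any real obstacle here: the crux is just the observation that the support of $z + j_0 t$ encodes a set cover of $[n'']$ whenever $j_0 \neq 0$, and this is immediate from $S(z + j_0 t) = -j_0(1,1,\ldots,1)$ having no zero coordinate. The only points demanding a little care are the sign convention $St = -(1,1,\ldots,1)$ and the requirement that the conclusion hold for all nonzero $j_0$ at once — but since $-j_0 \neq 0$ for every such $j_0$, the same one-line argument applies uniformly.
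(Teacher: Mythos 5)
Your proof is correct and follows essentially the same route as the paper's: in the $\YES$ case take the characteristic vector $x$ of the exact cover, use $Sx=(1,\ldots,1)$ and $St=-(1,\ldots,1)$ to conclude $z=x+t$ is in the lattice, and in the $\NO$ case observe $S(z+j_0 t)=-j_0(1,\ldots,1)$ forces the support of $z+j_0 t$ to index a cover of $[n'']$, hence of size at least $d$. The only difference is that you spell out the ``support encodes a cover'' step coordinate-by-coordinate, whereas the paper leaves this implicit.
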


\begin{proof}
If $(S,d)$ is a $\YES$ instance then there exists a vector $y \in
\{0,1\}^{n'}$ with exactly $\eta d$ coordinates equal to $1$ for
which $Sy=(1,1,\ldots,1)$. This implies that $S(y+t)=0$, so $z=y+t$
is the required lattice vector. On the other hand, if $(S,d)$ is a
$\NO$ instance, then for any $z \in \calL(B_{\CVP})$ we have
$S(z+j_0 t) = -j_0 \cdot (1,1,\ldots,1)$. This implies that the
nonzero coordinates of $z+j_0 t$ correspond to a cover $S' \subseteq
S$ of all elements in $[n'']$, and hence their number must be at least
$d$.
\end{proof}

\subsubsection{Step 2}
The second step of the reduction is based on BCH codes, as described in the
following theorem.

\begin{theorem}[{\cite[Page 255]{Alon00}}]\label{thm:Alon}
Let $N,d,h$ be integers satisfying $h=({d}/{2}) \log_2{N}$. Then
there exists an efficiently constructible matrix $P_{\tBCH}$ of size
$h \times N$ with $\{0,1\}$ entries such that the rows of the matrix
are linearly independent over $GF(2)$ and any $d$ columns of the
matrix are linearly independent over $GF(2)$.
\end{theorem}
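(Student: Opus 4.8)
This is the standard BCH-code construction, so the plan is simply to recall it and check the required properties. I would realize $P_{\tBCH}$ as a binary expansion of the parity-check matrix of a primitive narrow-sense BCH code of designed distance $d+1$, and verify the ``any $d$ columns are independent'' property via the classical BCH-bound argument. Assume $d$ is even (the general case follows by replacing $d$ with $2\ceil{d/2}$), and choose an integer $m$ with $2^m-1\ge N$ (so $m=\log_2 N$ when $N+1$ is a power of two, and $m=O(\log N)$ in general). Fix an explicit degree-$m$ irreducible polynomial over $GF(2)$ and work in $\mathbb{F}=GF(2^m)$; pick $N$ distinct nonzero elements $x_1,\dots,x_N\in\mathbb{F}$ (for instance $x_i=\alpha^i$ for a generator $\alpha$ of $\mathbb{F}^\times$). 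Over $\mathbb{F}$, form the $(d/2)\times N$ matrix $\tilde P$ with $(\tilde P)_{j,i}=x_i^{2j-1}$ for $j=1,\dots,d/2$ (the odd exponents $1,3,\dots,d-1$, of which there are $d/2$). Finally fix a basis of $\mathbb{F}$ over $GF(2)$ and replace each entry of $\tilde P$ by its coordinate column of length $m$; the result is a $\{0,1\}$ matrix $P_{\tBCH}$ of size $h\times N$ with $h=(d/2)m$, and every step runs in time polynomial in $N$ and $h$.

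The key observation is that coordinate expansion is an injective $GF(2)$-linear map $\mathbb{F}\to GF(2)^m$, so for $c\in\{0,1\}^N\subseteq\mathbb{F}^N$ we have $P_{\tBCH}\,c=0$ over $GF(2)$ if and only if $\tilde P\,c=0$ over $\mathbb{F}$. Hence ``any $d$ columns of $P_{\tBCH}$ are independent over $GF(2)$'' is exactly the assertion that $\tilde P\,c\ne 0$ for every nonzero $c\in\{0,1\}^N$ with at most $d$ nonzero entries. Suppose toward a contradiction that such a $c$ exists, i.e.\ $\sum_i c_i x_i^{2j-1}=0$ for $j=1,\dots,d/2$. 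Now apply the characteristic-two (Frobenius) trick: squaring an identity $\sum_i c_i x_i^{b}=0$ and using $c_i^2=c_i$ yields $\sum_i c_i x_i^{2b}=0$, so, iterating from each odd $b\in\{1,3,\dots,d-1\}$, one obtains $\sum_i c_i x_i^{\ell}=0$ for every integer $\ell$ whose largest odd divisor is at most $d-1$, and in particular for all $\ell\in\{1,2,\dots,d\}$. Letting $T=\{i:c_i=1\}$ and $t=|T|\le d$, the identities for $\ell=1,\dots,t$ read $M\mathbf{1}=0$, where $\mathbf{1}$ is the all-ones vector and $M=(x_i^{\ell})_{1\le \ell\le t,\ i\in T}$ is the product of the Vandermonde matrix $(x_i^{\ell-1})_{1\le \ell\le t,\ i\in T}$ (invertible since the $x_i$ are distinct) and the diagonal matrix $\operatorname{diag}(x_i)_{i\in T}$ (invertible since the $x_i$ are nonzero), hence invertible; so $M\mathbf{1}=0$ forces $\mathbf{1}=0$, a contradiction. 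Thus no $d$ columns of $P_{\tBCH}$ are linearly dependent over $GF(2)$.

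It remains to arrange that the $h$ rows of the output matrix are $GF(2)$-linearly independent, which is a bookkeeping matter: the column-independence property just established depends only on the $GF(2)$-kernel of $P_{\tBCH}$ (it says that kernel contains no nonzero vector of Hamming weight at most $d$), and this is preserved if we first replace $P_{\tBCH}$ by a basis of its row space (the kernel is unchanged) and then append rows linearly independent of the preceding ones until there are exactly $h$ of them (this only shrinks the kernel) --- possible in the only nondegenerate regime $h\le N$. Alternatively, for $N=2^m-1$ one may simply cite the standard fact that the primitive narrow-sense BCH code of designed distance $d+1$ has codimension exactly $(d/2)m$.

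The one genuinely substantive step is the BCH-bound argument in the second paragraph, and within it the Frobenius/squaring move: it is exactly what lets $d/2$ ``field rows'' (the odd powers) do the job of the $d$ consecutive powers $x^1,\dots,x^d$, and hence keeps the number of binary rows down to $h=(d/2)\log_2 N$ rather than $d\log_2 N$. The field and basis setup, the Vandermonde nonvanishing, and the row-rank patch are all routine.
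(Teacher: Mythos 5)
The paper does not prove this theorem itself --- it cites it as a known result from Alon and Spencer (\cite{Alon00}, page 255) --- so there is no internal proof to compare against. Your reconstruction is the standard BCH-code argument that the citation refers to, and it is correct: the parity-check matrix of odd powers $x_i^1, x_i^3,\dots,x_i^{d-1}$ over $GF(2^m)$, the binary coordinate expansion (valid because the expansion is $GF(2)$-linear and the coefficient vector $c$ is over $\{0,1\}$), the Frobenius/squaring step that promotes the $d/2$ odd power-sum vanishings to all $d$ consecutive ones, and the Vandermonde-times-diagonal invertibility that yields the contradiction are all exactly right. Your row-rank patch is also the correct way to guarantee the $h$ rows are independent without committing to the precise codimension of the BCH code; as you say, one can alternatively invoke the known dimension formula in the relevant parameter regime. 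The one small imprecision, which you flag yourself, is that for general $N$ the construction gives $m=\lceil\log_2(N+1)\rceil$, so $h$ may be slightly larger than $(d/2)\log_2 N$; the theorem statement is somewhat informal on this point, and the discrepancy is harmless for the paper's application, where $N=d^{2/\eta}$ is a power of two much larger than $d$.
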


Let $B_{\tBCH}$ be a basis of the lattice $\{y \in \Z^N :
(P_{\tBCH})y \equiv 0 \pmod 2\}$. Such a basis can be
easily constructed in polynomial time by duality (see the preliminaries
in~\cite{MicciancioR08}). The next lemma states some properties of this lattice.

\begin{lemma}\label{lemma:L_BCH}
Every nonzero vector in $\calL(B_{\tBCH})$ either has at least $d$
nonzero coordinates or all of its coordinates are even. Also, for any $r \ge 1$
it is possible to find in polynomial time with
probability at least $99/100$ a vector $s \in \{0,1\}^{N}$, such that
there are at least 
\[
\frac{1}{100 \cdot 2^h} {\binom{N}{r}}
\]
distinct lattice vectors $z \in \calL(B_{\tBCH})$ satisfying that
$z-s$ is a $\{0,1\}$ vector with exactly $r$ coordinates equal to
$1$.
\end{lemma}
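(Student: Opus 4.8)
The lemma has two parts. For the first part, let $v \in \calL(B_{\tBCH})$ be a nonzero vector, so $(P_{\tBCH}) v \equiv 0 \pmod 2$. If $v$ has some odd coordinate, consider its reduction $\bar v \in GF(2)^N$ modulo $2$; this is a nonzero codeword of the binary code whose parity-check matrix is $P_{\tBCH}$, and it satisfies $(P_{\tBCH}) \bar v = 0$ over $GF(2)$. If $\bar v$ had fewer than $d$ nonzero coordinates, the corresponding (at most $d-1 < d$) columns of $P_{\tBCH}$ would be linearly dependent over $GF(2)$, contradicting \expref{Theorem}{thm:Alon}. Hence $\bar v$, and therefore $v$, has at least $d$ nonzero coordinates. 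Otherwise all coordinates of $v$ are even, which is the second alternative.

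For the second part I would argue by a counting/averaging argument over the choice of $s$. Fix $r \ge 1$ and consider pairs $(s, y)$ where $y \in \{0,1\}^N$ has exactly $r$ ones and $s = y + z$ for some $z \in \calL(B_{\tBCH})$ with $s \in \{0,1\}^N$; equivalently, for each such $y$ we let $s$ be determined by the coset $y + \calL(B_{\tBCH})$ via the natural $\{0,1\}$ representative, or more cleanly, fix $s \in \{0,1\}^N$ and count how many weight-$r$ vectors $y \in \{0,1\}^N$ lie in the coset $s + \calL(B_{\tBCH})$. Two weight-$r$ vectors $y, y'$ lie in the same coset of $\calL(B_{\tBCH})$ iff $y - y' \in \calL(B_{\tBCH})$, i.e.\ iff $P_{\tBCH}(y - y') \equiv 0 \pmod 2$, i.e.\ iff $P_{\tBCH} y \equiv P_{\tBCH} y' \pmod 2$. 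Thus the coset of a weight-$r$ vector is determined by the syndrome $P_{\tBCH} y \bmod 2 \in GF(2)^h$, which takes at most $2^h$ values. Since there are $\binom{N}{r}$ weight-$r$ vectors in total, by pigeonhole some syndrome value $\sigma^\star$ is attained by at least $\binom{N}{r}/2^h$ of them. Picking $s$ to be any $\{0,1\}$ vector with $P_{\tBCH} s \equiv \sigma^\star \pmod 2$ then guarantees at least $\binom{N}{r}/2^h$ weight-$r$ vectors $y$ with $y - s \in \calL(B_{\tBCH})$, i.e.\ at least that many lattice vectors $z = y - s$ of the desired form; this already beats the claimed bound $\frac{1}{100 \cdot 2^h}\binom{N}{r}$.

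The remaining issue is the ``in polynomial time with probability at least $99/100$'' clause: the pigeonhole argument is existential, and we need to \emph{find} a good $s$ efficiently. Here I would sample $y \in \{0,1\}^N$ uniformly at random among weight-$r$ vectors, set $s = $ some fixed $\{0,1\}$ representative of $y$ (e.g.\ $s = y$ itself), and bound from below the number of weight-$r$ vectors in the coset $y + \calL(B_{\tBCH})$. By the above, the expected size of this coset intersection (over random $y$) is $\binom{N}{r} \cdot \Pr_{y'}[P_{\tBCH}y' \equiv P_{\tBCH}y]$, which averages to $\binom{N}{r}/2^h$ again, but we need a high-probability lower bound, not just an expectation. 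A clean way is to note that the fraction of weight-$r$ vectors $y$ whose syndrome class has size less than $\frac{1}{100 \cdot 2^h}\binom{N}{r}$ is at most $2^h \cdot \frac{1}{100 \cdot 2^h} = \frac{1}{100}$, since there are at most $2^h$ such ``small'' classes and each contributes fewer than $\frac{1}{100\cdot 2^h}\binom{N}{r}$ vectors. Hence a uniformly random weight-$r$ vector $y$ lands in a class of size at least $\frac{1}{100\cdot 2^h}\binom{N}{r}$ with probability at least $99/100$; taking $s = y$ for such a $y$ finishes the argument. The main obstacle is precisely this last step --- converting the averaging bound into an efficient randomized procedure with the stated success probability --- and the small-class counting observation is the key to doing it cleanly.
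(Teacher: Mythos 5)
Your proof is correct and, modulo a small slip, closely parallels the paper's argument. The first part (odd coordinates correspond to a nonzero codeword of the code with parity-check matrix $P_{\tBCH}$, hence at least $d$ of them) is identical to the paper's. For the second part you and the paper both run an averaging argument, but with different sampling schemes and bookkeeping. The paper samples $s$ by taking a uniformly random vector in $\calL(B_{\tBCH})\cap\{0,1\}^N$ and flipping $r$ random coordinates, then counts $A_s$, the number of codewords at Hamming distance exactly $r$ from $s$, and bounds $\Pr[A_s \text{ small}]$ by a weighted sum over bad $s$. You instead sample $s$ uniformly among weight-$r$ vectors and count directly the weight-$r$ vectors in the same syndrome class as $s$, using the observation that the at most $2^h$ ``small'' classes together account for under a $1/100$ fraction of the $\binom{N}{r}$ vectors. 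These two counts are in bijection (XOR with $s$ turns a codeword at distance $r$ into a weight-$r$ vector with syndrome $P_{\tBCH}s$ and back), so the approaches are equivalent in content; yours is arguably slightly more direct since it avoids introducing the auxiliary random process. The one error to fix: you write that the good lattice vectors are $z = y - s$, but then $z - s = y - 2s$, which is not a $\{0,1\}$ vector. You want $z = y + s$ (entries in $\{0,1,2\}$, and $P_{\tBCH}(y+s)\equiv 2\,P_{\tBCH}s \equiv 0 \pmod 2$ since $y$ and $s$ share a syndrome), which gives $z - s = y$, a $\{0,1\}$ vector with exactly $r$ ones. Equivalently, starting from $y - s \in \calL(B_{\tBCH})$, add $2$ to every coordinate where $s_i=1$ and $y_i=0$; either fix yields the same set of $z$'s and the count is unaffected.
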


\begin{proof}
Let $y \in \calL(B_{\tBCH})$ be a nonzero lattice vector. Observe that
if $y$ has an odd coordinate then its odd coordinates correspond
to column vectors of $P_{\tBCH}$ that sum to the zero vector over
$GF(2)$. Therefore, their number must be at least $d$. This proves
the first statement.

We now prove the second statement. Consider the set $\calL(B_{\tBCH}) \cap \{0,1\}^N$ whose size is
$2^{N-h}$ (since as a subset of $GF(2)^N$ it is the kernel of $P_{\tBCH}$ whose dimension is $N-h$). In order to choose $s$ we first uniformly pick a
vector in this set and then we uniformly pick
$r$ of its coordinates and flip them. For a vector $s \in \{0,1\}^N$
let $A_s$ denote the number of ways in the above process to obtain
$s$ among the $2^{N-h} \cdot {\binom{N}{r}}$ possible ways.
The probability that the chosen $s$
satisfies 
\[
A_s \leq \frac{1}{100 \cdot 2^h} {\binom{N}{r}}\quad\text{is}\quad
\sum_{s \mbox{~s.t.~}A_s \leq \frac{1}{100 \cdot 2^h} {\binom{N}{r}}}{\frac{A_s}{2^{N-h}{\binom{N}{r}}}} \leq 2^N \cdot \frac{\frac{1}{100 \cdot 2^h}
{\binom{N}{r}}}{2^{N-h}{\binom{N}{r}}} \leq \frac{1}{100}\,,
\]
thus with probability at least $99/100$ we obtain an $s$
that satisfies 
\[
A_s > \frac{1}{100 \cdot 2^h} {\binom{N}{r}}\,.
\]
It remains to notice that such an $s$ also satisfies the requirement
in the statement of the lemma (since from each vector in $\{0,1\}^N$
of Hamming distance $r$ from $s$ we can obtain $z \in \calL(B_{\tBCH})$ as in
the statement by simply adding $2$ to a subset of its coordinates).
\end{proof}

We now construct the \emph{intermediate lattice} generated by a basis
matrix $B_{\tint}$ (see \expref{Figure}{fig:bint}). Let $\eta$ be a sufficiently small
constant, say $1/128$. Let $r = ({3}/{4}+\eta)d$
and choose $s$ as in \expref{Lemma}{lemma:L_BCH}. We choose the
parameters of $B_{\tBCH}$ to be $N = d^{2/\eta}$, $d$, and
$h=({d}/{2}) \log_2{N}$. Consider a matrix whose upper left block
is $2 \cdot B_{\CVP}$, whose lower right block is $B_{\tBCH}$, and whose
other entries are zeros. Adding to this matrix the column given by
the concatenation of $2 \cdot t$ and $s$, we obtain the basis matrix
$B_{\tint}$ of the intermediate lattice.

\begin{figure}[ht]
\begin{center}
\includegraphics[width=2in]{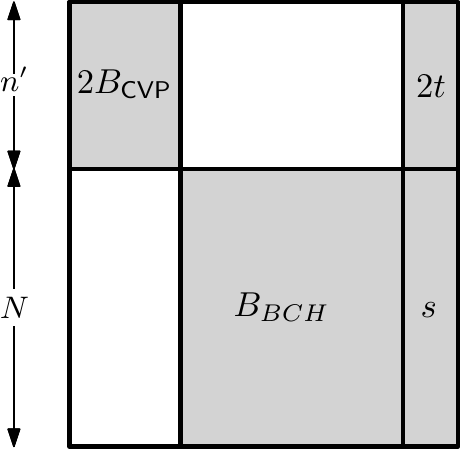}
\end{center}
\caption{$B_{\tint}$}
\label{fig:bint}
\end{figure}

The following two lemmas describe the properties of
$\calL(B_{\tint})$. The first one states that if the $\CVP$ instance is a
$\YES$ instance then $\calL(B_{\tint})$ contains many short vectors.
Define $\gamma = ({3}/{4}+5\eta)^{1/2} < 1$. A nonzero lattice
vector of $\calL(B_{\tint})$ is called \emph{good} if it has $\ell_2$
norm at most $\gamma \cdot \sqrt{d}$, has $\{0,1,2\}$ coordinates,
and has at least one coordinate equal to $1$.

\begin{lemma}\label{lemma:Good}%
If the $\CVP$ instance is a $\YES$ instance and the vector $s$
has the property from \expref{Lemma}{lemma:L_BCH}, then there are at least
$\frac{1}{100 \cdot 2^h} {\binom{N}{r}}$ good lattice vectors in
$\calL(B_{\tint})$.
\end{lemma}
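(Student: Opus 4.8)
The plan is to exhibit an explicit injection from the family of lattice vectors $z \in \calL(B_{\tBCH})$ produced by \expref{Lemma}{lemma:L_BCH} into the set of good vectors of $\calL(B_{\tint})$, so that the count $\frac{1}{100 \cdot 2^h}\binom{N}{r}$ carries over. First I would recall the block structure of $B_{\tint}$: a generic lattice vector can be written as $w = (2(Cvp\text{-part} + j_0 t),\; B_{\tBCH}\text{-part} + j_0 s)$ where the integer $j_0$ is the multiplier on the extra column $(2t, s)$. I will set $j_0 = 1$. For the lower block, \expref{Lemma}{lemma:L_BCH} gives us at least $\frac{1}{100\cdot 2^h}\binom{N}{r}$ lattice vectors $z \in \calL(B_{\tBCH})$ such that $z - s$ is a $\{0,1\}$-vector with exactly $r = (3/4+\eta)d$ ones; equivalently, $z$ itself (being $s$ plus $2\cdot(\text{subset indicator})$... no — rather $z = s + (z-s)$, so the coordinates of $z$ lie in $\{0,1,2\}$) has $\{0,1,2\}$ entries, with exactly $r$ of them odd. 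For the upper block, since the $\CVP$ instance is a $\YES$ instance, \expref{Lemma}{lem:B_cvp} furnishes a lattice vector $z' \in \calL(B_{\CVP})$ with $z' - t \in \{0,1\}^{n'}$ having exactly $\eta d$ coordinates equal to $1$; then $2(z' - t) \cdot$... — more precisely the upper block of $w$ equals $2 z'$ when we take the $B_{\CVP}$ coefficients to realize $z'$, but the $j_0 = 1$ contributes $2t$, so the upper block is $2(z'' + t)$ for the appropriate $z''$; choosing $z'' = z' - t$ makes the upper block $2z'$, which has $\{0,2\}$-coordinates with exactly $\eta d$ of them equal to $2$. So the combined vector $w$ has $\{0,1,2\}$-coordinates, has coordinates equal to $1$ (the $r \ge 1$ odd coordinates in the lower block), and I must bound its norm.

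The norm computation is the routine part. The lower block $z$ has $r$ coordinates equal to $1$ (contributing $r$ to $\|w\|_2^2$) and some number of coordinates equal to $2$; but here is the subtlety — the lemma only controls the Hamming weight of $z-s$, not the total number of $2$'s in $z$, so I cannot naively bound $\|z\|_2^2$. The right move is to observe that $z - s$ has exactly $r$ ones and $z$'s coordinates equal to $2$ are precisely the coordinates where $s$ is $1$ and $z - s$ is $1$ — wait, that overcounts. Let me instead bound differently: write $z = s + e$ where $e \in \{0,1\}^N$ has weight $r$. Then $z$'s coordinate is $0$ (where $s=e=0$), $1$ (where exactly one of $s,e$ is $1$), or $2$ (where $s=e=1$). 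The coordinates equal to $1$ or $2$ are contained in $\text{supp}(s) \cup \text{supp}(e)$. Hmm — this still doesn't immediately bound $\|z\|_2^2$ by something small. Actually the key realization must be that we should replace $z$ by the vector $z - 2(\text{indicator of } \text{supp}(s)\cap\text{supp}(e))$, or better: the lattice vector we actually want in the lower block is the one whose nonzero pattern is exactly $e$ with all entries $\pm 1$, but $\calL(B_{\tBCH})$-membership forces parity... I think the cleanest route, and the one the paper intends, is: the good vector is $w$ with lower block $z - s + s = z$ — no. Let me reconsider: $z - s \in \{0,1\}^N$ with weight $r$ means $\|z-s\|_2^2 = r$, but $z - s$ need not be a lattice vector. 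The vector $w$ should have lower block equal to $z$ (a genuine lattice vector), and we need $\|z\|_2^2 \le $ (something). Since $z$ has $\{0,1,2\}$-entries and the $2$'s sit inside $\text{supp}(s)$, we get $\|z\|_2^2 \le |\text{supp}(e) \setminus \text{supp}(s)| + 4|\text{supp}(e)\cap\text{supp}(s)| \le 4r$ — too weak. So the honest bound must exploit that $s$ was built with small Hamming weight or that we only add $2$ to few coordinates; rereading \expref{Lemma}{lemma:L_BCH}, $s \in \{0,1\}^N$ is arbitrary weight, but the vector $z$ has $z - s$ of weight exactly $r$, and in the construction $z$ is obtained from a $\{0,1\}^N$-codeword by adding $2$ to a subset — so actually $z$'s $2$-coordinates number at most $r$ as well and in fact $\|z\|_2^2 \le r + 3\cdot(\text{number of 2's})$. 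I expect the intended accounting makes the lower block contribute $\le (3/4 + O(\eta))d$ and the upper block contribute $4\eta d$, for a total $\le (3/4 + 5\eta)d = \gamma^2 d$.

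The \textbf{main obstacle}, then, is precisely this norm bookkeeping: making sure that the lattice vector we select has squared norm at most $\gamma^2 d = (3/4 + 5\eta)d$ rather than something like $4r = (3+4\eta)d$. I would resolve it by being careful about which lattice vector to pick in the lower block — namely, for each Hamming-weight-$r$ vector at distance $r$ from $s$, pick the canonical lattice representative whose coordinates are $0$ on the common support issue is avoided because $z$ agrees with $s$ off the flipped set and the flips turn $1\to 0$ or $0\to 1$; a coordinate equals $2$ only... actually if $z = s + e$ with $e\in\{0,1\}^N$ then a coordinate is $2$ iff $s_i = e_i = 1$, and since flipping means $e$ marks coordinates we change, a more natural representative adds $2$ only where needed for lattice membership. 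In any case, the counting from \expref{Lemma}{lemma:L_BCH} is weight-based and distinct choices of $e$ give distinct $w$, so the cardinality $\frac{1}{100\cdot 2^h}\binom{N}{r}$ transfers intact; combined with the upper-block contribution of exactly $\eta d$ (from the $\{0,1\}$-vector $z'-t$ scaled by $2$, giving squared norm $4\eta d$) and the verified lower-block contribution, the norm is at most $(3/4 + 5\eta)d$, every such $w$ is good, and the lemma follows. I would present the injectivity as immediate (the lower block of $w$ already determines $z$) and spend the bulk of the proof on the norm inequality.
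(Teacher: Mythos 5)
The paper's construction takes the coefficient on the extra column $(2t\circ s)$ to be $j_0=-1$, not $j_0=+1$. With $j_0=-1$ and coefficient vectors $y,x$ chosen so that $B_{\CVP}y=z_{\CVP}$ (the CVP lattice vector from \expref{Lemma}{lem:B_cvp}) and $B_{\tBCH}x=z_{\tBCH}$ (the BCH lattice vector from \expref{Lemma}{lemma:L_BCH}), the produced vector $B_{\tint}(y\circ x\circ(-1))$ is $2(z_{\CVP}-t)\circ(z_{\tBCH}-s)$, whose lower block is \emph{exactly} the $\{0,1\}$ vector of weight $r$ and whose upper block is $2$ times a $\{0,1\}$ vector of weight $\eta d$. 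The squared norm is then exactly $4\eta d + r = (3/4+5\eta)d = \gamma^2 d$, and the counting is immediate.

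Your proposal sets $j_0=+1$ and then tries to make the lower block equal $z_{\tBCH}$ rather than $z_{\tBCH}-s$. This breaks down in two ways. First, with $j_0=+1$ the lower block is $B_{\tBCH}(\cdot)+s$, so forcing it to equal $z_{\tBCH}$ would require $B_{\tBCH}(\cdot)=z_{\tBCH}-s$; but $z_{\tBCH}-s$ is a $\{0,1\}$ vector that is generically \emph{not} in $\calL(B_{\tBCH})$ (and similarly $z'-t$ is not in $\calL(B_{\CVP})$, so ``choosing $z''=z'-t$'' is not a legitimate choice of lattice vector). Second, even if one could place $z_{\tBCH}$ in the lower block, its norm is genuinely uncontrolled: the number of $1$'s in $z_{\tBCH}=(z_{\tBCH}-s)+s$ is governed by the symmetric difference of $\mathrm{supp}(s)$ and $\mathrm{supp}(z_{\tBCH}-s)$, and $\mathrm{wt}(s)$ can be as large as $\Theta(N)$. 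You identify exactly this as ``the main obstacle,'' which is the right diagnosis, but you do not resolve it: the closing paragraph asserts the bound $(3/4+5\eta)d$ without an argument. The missing idea is simply to use $j_0=-1$ so that the constructed vector is $2(z_{\CVP}-t)\circ(z_{\tBCH}-s)$; then the entries are in $\{0,1,2\}$, there is at least one $1$ (namely the $r$ ones in the lower block), and the norm computation is a one-liner.
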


\begin{proof}
Assume that the $\CVP$ instance is a $\YES$ instance. By
\expref{Lemma}{lem:B_cvp}, this implies that there exists $y$ such that
$B_{\CVP}y-t$ is a $\{0,1\}$ vector and has exactly $\eta d$
coordinates equal to $1$. Let $s$ be as in \expref{Lemma}{lemma:L_BCH}, so there are at
least $\frac{1}{100 \cdot 2^h} {\binom{N}{r}}$ distinct choices of
$x$ for which $(B_{\tBCH})x-s$ is a $\{0,1\}$ vector with exactly $r$
coordinates equal to $1$. For every such $x$, the lattice
vector\footnote{We use $\circ$ to denote concatenation of vectors.}
$$B_{\tint}(y \circ x \circ (-1)) = \bigl(2(B_{\CVP}y-t) \circ
((B_{\tBCH})x-s)\bigr)$$ has $\{0,1,2\}$ coordinates, has at least one
coordinate equal to $1$, and has $\ell_2$ norm $\sqrt{4\eta d + r} =
\gamma \cdot \sqrt{d}$, as required.
\end{proof}

The second lemma shows that if the $\CVP$ instance is a $\NO$ instance then
$\calL(B_{\tint})$ contains few vectors that do not have the property from
\expref{Theorem}{thm:Khot}, Item 2. We call such vectors \emph{annoying}. In more
detail, a lattice vector of $\calL(B_{\tint})$ is annoying if it satisfies \emph{all} of the following:
\begin{itemize}
    \item The number of its nonzero coordinates is smaller than $d$.
    \item Either it contains an odd coordinate or the number of its nonzero coordinates is smaller than $d/4$.
    \item Either it contains an odd coordinate or it has norm smaller than $d$.
\end{itemize}

\begin{lemma}\label{lemma:Annoying}
If the $\CVP$ instance is a $\NO$ instance, then there are at most
$d^{d/4} \cdot
{\binom{N+n'}{d/4}}$ annoying lattice vectors in
$\calL(B_{\tint})$.
\end{lemma}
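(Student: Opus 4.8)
The plan is to bound the number of annoying vectors by controlling where their nonzero coordinates can lie and how large the coordinates can be. A lattice vector of $\calL(B_{\tint})$ has the form $B_{\tint}(y \circ x \circ j_0)$ for integers, and its first $n'$ coordinates equal $2(B_{\CVP}y + j_0 t)$ while its last $N$ coordinates equal $(B_{\tBCH})x + j_0 s$. First I would observe that if an annoying vector has $j_0 = 0$, then its ``CVP part'' lies in $2 \cdot \calL(B_{\CVP})$ and its ``BCH part'' lies in $\calL(B_{\tBCH})$; by \expref{Lemma}{lemma:L_BCH} the BCH part then either has at least $d$ nonzero coordinates or is all-even, and a short case analysis against the three bullets in the definition of ``annoying'' forces the vector to have at least $d$ nonzero coordinates or to be entirely even---in either case one checks it is not annoying (the first bullet fails, or the third bullet fails since an all-even nonzero vector has norm $\ge 2$, and more careful bookkeeping with the BCH property pushes the norm to $\ge d$). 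Hence every annoying vector has $j_0 \neq 0$, and then by \expref{Lemma}{lem:B_cvp} the CVP part $2(B_{\CVP}y + j_0 t)$ has at least $d$ nonzero coordinates---but those coordinates are all \emph{even}, so the vector cannot have fewer than $d$ nonzero coordinates unless it violates the all-even requirement of bullets two and three; combined with the first bullet ($<d$ nonzero coordinates), this forces the annoying vector to have an odd coordinate and at the same time to have all its CVP coordinates even and numbering at least $d$, a contradiction. Wait---this would kill all annoying vectors, so the real argument must be subtler: the point is that the CVP part being nonzero already gives $\ge d$ nonzero (even) coordinates, so annoying vectors must have CVP part \emph{zero}, i.e.\ $B_{\CVP}y + j_0 t = 0$.

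So the key structural reduction is: every annoying vector has its first $n'$ coordinates equal to zero, hence is determined entirely by its BCH part $w := (B_{\tBCH})x + j_0 s$, a nonzero vector. Now I would invoke \expref{Lemma}{lemma:L_BCH}-style reasoning on $w$: either $w$ has $\ge d$ nonzero coordinates (then bullet one fails, not annoying), or $w$ is all-even. In the all-even case, write $w = 2w'$; an argument along the lines of the first statement of \expref{Lemma}{lemma:L_BCH} (applied to $w'$, or directly using the BCH minimum-distance property modulo higher powers of $2$) shows that either $w'$ has $\ge d$ nonzero coordinates---so $w$ has $\ge d$ nonzero coordinates and bullet one fails---or $w$ has $< d/4$ nonzero coordinates and then, because $w$ is a nonzero even vector whose support is ``BCH-free,'' its entries must be large enough that $\|w\|_2 \ge d$, making bullet three fail. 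In every branch $w$ is not annoying unless $w$ has fewer than $d/4$ nonzero coordinates and all coordinates even; but then bullet three needs $\|w\| < d$, which together with $< d/4$ nonzero even coordinates of size $\ge 2$ is consistent only for a bounded range of coordinate values. This is exactly where the counting bound comes from: an annoying $w$ has at most $d/4$ nonzero coordinates, each chosen among $N + n'$ positions (the $N$ is the honest count; the $n'$ is slack since those coordinates are in fact zero), and each nonzero value has absolute value at most $d$ (since $\|w\|_2 < d$ forces $|w_i| \le d$). Thus the number of annoying vectors is at most $\binom{N+n'}{d/4} \cdot (2d)^{d/4}$, and after absorbing constants this is at most $d^{d/4}\binom{N+n'}{d/4}$ as claimed.

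I would organize the write-up as: (1) expand a general lattice vector of $\calL(B_{\tint})$ into its CVP and BCH parts; (2) use \expref{Lemma}{lem:B_cvp} to show the CVP part of an annoying vector must vanish (the $\ge d$ nonzero-coordinate dichotomy there, combined with all those coordinates being doubled hence even, contradicts every clause in the annoying definition otherwise); (3) use the first statement of \expref{Lemma}{lemma:L_BCH} and its ``mod $4$'' refinement to show the surviving BCH part $w$, if annoying, has $<d/4$ nonzero coordinates, all even, and norm $<d$; (4) conclude each such $w$ is specified by a choice of $\le d/4$ support positions out of $N+n'$ and coordinate values bounded by $d$ in magnitude, giving the stated count. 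The main obstacle I anticipate is step (3): getting the quantitative claim that a nonzero even BCH-lattice vector with few nonzero coordinates must have \emph{large} norm. The clean way is to push the BCH column-independence property one level further---if $w = 2w' = 4w''$ we can iterate---or more simply to note that the third bullet of \expref{Theorem}{thm:Khot} was designed precisely so that ``all-even, few nonzero coordinates'' forces norm $\ge d$; here we only need the contrapositive for counting, so it suffices that an annoying even $w$ has $\|w\|_\infty \le \|w\|_2 < d$, which makes the coordinate-value bound essentially free and sidesteps the need for any delicate lower bound. With that observation the counting in step (4) is routine.
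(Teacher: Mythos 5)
Your high-level plan is the right one—split $B_{\tint}x$ into its CVP and BCH blocks, use Lemma~\ref{lem:B_cvp} on the CVP block and Lemma~\ref{lemma:L_BCH} on the BCH block, and then count vectors with small even entries supported on few coordinates. But the structural reduction you land on is false, and it stems from a misapplication of Lemma~\ref{lem:B_cvp}. That lemma only says something when $j_0\neq 0$: it yields that $B_{\CVP}y + j_0 t$ has at least $d$ nonzero coordinates, hence the full vector has at least $d$ nonzero coordinates and so fails the first clause of ``annoying.'' The correct conclusion is therefore only that every annoying vector has $j_0=0$. It does \emph{not} follow that $B_{\CVP}y=0$; once $j_0=0$ the CVP block is $2B_{\CVP}y$ for an arbitrary $y$, and nonzero $y$'s with few nonzero coordinates in $\calL(B_{\CVP})$ are perfectly possible (the lattice is the integer kernel of $S$, which has no minimum-weight guarantee). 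So the statement ``every annoying vector has its first $n'$ coordinates equal to zero'' is wrong, and the ``Wait---this would kill all annoying vectors'' detour is a symptom of this confusion: nothing is killed except the case $j_0\neq 0$.

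The rest of the argument is then simpler than you make it. With $j_0=0$, the CVP block $2B_{\CVP}y$ is automatically all-even; the first clause of ``annoying'' gives fewer than $d$ nonzero coordinates overall, so by the first statement of Lemma~\ref{lemma:L_BCH} the BCH block is all-even too. Now the whole vector is even, so the second and third clauses of ``annoying'' directly give fewer than $d/4$ nonzero coordinates and $\ell_2$ norm below $d$, hence every entry has absolute value below $d$. No ``mod~$4$ refinement'' of the BCH property is used or needed; those two quantitative facts come straight from the definition of annoying, not from anything about BCH codes. Your final count is numerically the same as the paper's, but only because you kept $\binom{N+n'}{d/4}$ ``as slack''---that $n'$ is not slack, it is genuinely needed to account for nonzero entries in the CVP block. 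If you had tightened to $\binom{N}{d/4}$ following your (false) structural claim, the bound would be unjustified.
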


\begin{proof}
Assume that the $\CVP$ instance is a $\NO$ instance and let
$B_{\tint}x$ be an annoying vector with coefficient vector $x = y
\circ z \circ (j_0)$. We have
$$B_{\tint}x = 2(B_{\CVP}y+j_0 t) \circ (B_{\tBCH}z+j_0 s)\,.$$
By \expref{Lemma}{lem:B_cvp}, if $j_0 \neq 0$ then the vector
$B_{\CVP}y+j_0 t$ has at least $d$ nonzero coordinates, so it is not
an annoying vector. Thus we can assume that $j_0 = 0$ and therefore
$B_{\tint}x = 2(B_{\CVP}y) \circ (B_{\tBCH}z)$.

Since $B_{\tint}x$ is annoying we know that it has fewer than $d$
nonzero coordinates, so by \expref{Lemma}{lemma:L_BCH} we get that all
coordinates of $B_{\tint}x$ are even. Again by the definition of an
annoying vector, we conclude that fewer than $d/4$ of the
coordinates of $B_{\tint}x$ are nonzero and all of them have absolute
value smaller than $d$. Thus, we get a bound of $d^{d/4} \cdot
{\binom{N+n'}{d/4}}$ on the number of possible choices for $B_{\tint}x$,
and this completes the proof of the lemma.
\end{proof}

\subsubsection{Step 3}
In the third step we construct the final $\SVP$ instance as claimed in
\expref{Theorem}{thm:Khot}. By \expref{Lemma}{lemma:Good}, the number of good vectors
in the $\YES$ case is at least
$$\frac{1}{100 \cdot 2^h}{\binom{N}{r}} = \frac{1}{100 \cdot 2^{(d \log_2{N}) / 2}}{\binom{N}{(3/4+\eta)d}} \geq \frac{N^{(3/4+\eta)d}}{100 \cdot d^d \cdot N^{d/2}} = \frac{N^{(1/4+\eta)d}}{100\cdot d^d}=:G\,.$$
By \expref{Lemma}{lemma:Annoying}, in the $\NO$ case there are at most $A:=d^{d/4} \cdot
{\binom{N+n'}{d/4}}$ annoying vectors.
By our choice of $N$ and the fact that $n' = O(d)$,
for sufficiently large $d$ we have $n' \leq N$ and hence
\[
A \leq  d^{d/4} \cdot (2N)^{d/4} \leq 10^{-5} \cdot G\,.
\]

Choose a prime $q$ in the interval $[100A,G/100]$ and let $w \in
\Z^{n'+N}$ be a vector whose coordinates are chosen randomly and
uniformly from the range $\{0,\ldots,q-1\}$. The final output of the
reduction is a basis $B$ of the lattice $\{x \in \calL(B_{\tint}) :
\langle w, x\rangle \equiv 0 \pmod q \}$.

\begin{lemma}\label{lemma:finalYes} %
If the $\CVP$ instance is a $\YES$ instance and the vector $s$
has the property from \expref{Lemma}{lemma:L_BCH}, then with probability at
least $99/100$ over the choice of the vector $w$, there exists a
lattice vector in $\calL(B)$ with $\ell_2$ norm at most $\gamma
\cdot \sqrt{d}$.
\end{lemma}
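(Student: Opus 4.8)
The plan is a second-moment argument over the random vector $w$, exploiting that every good vector of $\calL(B_{\tint})$ survives the modular constraint $\langle w,\cdot\rangle\equiv 0\pmod q$ with probability exactly $1/q$, while distinct good vectors survive essentially independently.

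\emph{First}, by \expref{Lemma}{lemma:Good} (whose hypotheses are exactly the ones assumed here), $\calL(B_{\tint})$ contains a set $\mathcal{G}$ of at least $G=N^{(1/4+\eta)d}/(100\,d^d)$ good vectors. Fix $x\in\mathcal{G}$ together with a coordinate $i$ for which $x_i=1$. Since $w$ is uniform over $\{0,\dots,q-1\}^{n'+N}$, conditioning on the coordinates $(w_j)_{j\ne i}$ the quantity $\langle w,x\rangle=w_i+\sum_{j\ne i}w_jx_j$ is uniform modulo $q$; hence $\langle w,x\rangle\equiv 0\pmod q$ (that is, $x\in\calL(B)$) holds with probability exactly $1/q$. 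Let $X$ count the good vectors landing in $\calL(B)$; then $\mathbb{E}[X]=|\mathcal{G}|/q\ge G/q\ge 100$, using $q\le G/100$.

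\emph{Second}, I would prove pairwise independence. The claim is that any two distinct good vectors $x\ne x'$ are linearly independent over $GF(q)$ (recall $q$ is prime). If not, $x'\equiv c\,x\pmod q$ for some $c\in GF(q)$; evaluating at a coordinate $i$ with $x_i=1$ forces $c\equiv x'_i\in\{0,1,2\}\pmod q$. Now $c\equiv 0$ gives $x'\equiv 0\pmod q$, impossible since $x'\ne 0$ has entries in $\{0,1,2\}$ and $q>2$; $c\equiv 1$ gives $x'\equiv x\pmod q$, hence $x'=x$, contradicting distinctness; and $c\equiv 2$, evaluated at a coordinate $j$ with $x'_j=1$, gives $2x_j\equiv 1\pmod q$ with $2x_j\in\{0,2,4\}$, which would force $q\le 3$, contradicting $q\ge 100A$. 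Linear independence makes $w\mapsto(\langle w,x\rangle,\langle w,x'\rangle)$ a surjective $GF(q)$-linear map, so $\Pr[x,x'\in\calL(B)]=1/q^2$; thus the indicator random variables $\mathbf{1}[x\in\calL(B)]$, $x\in\mathcal{G}$, are pairwise uncorrelated.

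\emph{Finally}, pairwise uncorrelatedness yields $\mathrm{Var}(X)=\sum_{x\in\mathcal{G}}\mathrm{Var}\bigl(\mathbf{1}[x\in\calL(B)]\bigr)\le\mathbb{E}[X]$, so by Chebyshev's inequality $\Pr[X=0]\le\mathrm{Var}(X)/\mathbb{E}[X]^2\le 1/\mathbb{E}[X]\le 1/100$. Hence with probability at least $99/100$ some good vector lies in $\calL(B)$, and such a vector is a nonzero lattice point of $\ell_2$ norm at most $\gamma\sqrt d$, which is what the lemma asserts. The one step requiring care is the linear-independence claim: the rest is a routine first/second-moment computation, but that claim is exactly where the precise structure of good vectors ($\{0,1,2\}$ entries, some entry equal to $1$) and the size of $q$ (in particular $q>3$, which follows from $q\ge 100A$) genuinely enter.
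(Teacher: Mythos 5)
Your proof is correct and takes essentially the same route as the paper: compute the expected number of good vectors landing in $\calL(B)$ (at least $G/q \ge 100$), observe that the survival events are pairwise independent because distinct good vectors are pairwise linearly independent modulo the prime $q$, and conclude by Chebyshev that $\Pr[X=0]\le q/G\le 1/100$. The paper merely asserts the pairwise linear independence modulo $q$; your careful verification of it (via the $\{0,1,2\}$ entries, a unit coordinate, and $q>3$) fills in a detail the paper leaves to the reader but does not change the argument.
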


\begin{proof}
If the $\CVP$ instance is a $\YES$ instance and $s$
has the property from \expref{Lemma}{lemma:L_BCH}, then by \expref{Lemma}{lemma:Good}
there are at least $G$ good vectors in $\calL(B_{\tint})$, \ie, vectors with $\ell_2$
norm at most $\gamma \cdot \sqrt{d}$, coordinates from $\{0,1,2\}$, and
at least one coordinate equal to $1$.
For each good vector $x$, consider the event that $\langle w, x\rangle \equiv 0 \mbox{ (mod $q$)}$.
Since a good vector is nonzero, we clearly have that each such event occurs with probability $1/q$.
Moreover, observe that these vectors are pairwise linearly independent modulo $q$ and therefore
these events are pairwise independent.
Therefore, using Chebyshev's
Inequality, with probability at least $1-{q}/{G} \geq 99/100$, at least one of these events
happens, and we are done.
\end{proof}

\begin{lemma}\label{lemma:finalNo}
If the $\CVP$ instance is a $\NO$ instance, then with probability at
least $99/100$ over the choice of the vector $w$, for every
nonzero lattice vector $v \in \calL(B)$,
\begin{itemize}
    \item $v$ has at least $d$ nonzero coordinates, or
    \item all coordinates of $v$ are even and at least $d/4$ of them are nonzero, or
    \item all coordinates of $v$ are even and $\|v\|_2 \geq d$.
\end{itemize}
\end{lemma}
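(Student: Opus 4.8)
The plan is to identify the nonzero vectors of $\calL(B_{\tint})$ that could violate all three conclusions with the \emph{annoying} vectors defined above, and then to show that with high probability over the random vector $w$ none of them survives in the sublattice $\calL(B) \subseteq \calL(B_{\tint})$. The first step is immediate: negating the disjunction appearing in the statement yields exactly the three defining conditions of an annoying vector, so a nonzero $v \in \calL(B)$ fails all three conclusions if and only if it is annoying. Hence it suffices to show that with probability at least $99/100$ the lattice $\calL(B)$ contains no annoying vector.

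For this I would first invoke \expref{Lemma}{lemma:Annoying}, which bounds the number of annoying vectors in $\calL(B_{\tint})$ by $A = d^{d/4}\binom{N+n'}{d/4}$. Fix one such vector $v$. Recall from the proof of \expref{Lemma}{lemma:Annoying} that every annoying vector has all of its coordinates even, so by its third defining property it has $\ell_2$ norm smaller than $d$; in particular every coordinate of $v$ has absolute value smaller than $d$. Since $q \ge 100A > d$, it follows that $v \not\equiv 0 \pmod q$, and therefore $\langle w, v\rangle \bmod q$ is uniformly distributed over $\{0,\ldots,q-1\}$ when $w$ is chosen uniformly at random. Consequently $\Prob{w}{\langle w, v\rangle \equiv 0 \pmod q} = 1/q$.

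Finally, a union bound over the at most $A$ annoying vectors shows that the probability that $\calL(B)$ contains some annoying vector is at most $A/q \le A/(100A) = 1/100$. Thus with probability at least $99/100$ every nonzero $v \in \calL(B)$ satisfies one of the three stated properties, which is the claim. The only point that needs care is the uniformity claim: one must verify that each annoying vector is nonzero modulo $q$, which is precisely where the a priori bound on the coordinates of annoying vectors (from \expref{Lemma}{lemma:Annoying}) and the choice $q \ge 100A$ enter. Unlike the $\YES$ case, no pairwise-independence argument is required here — since $q$ was chosen much larger than the number $A$ of annoying vectors, the crude union bound already suffices.
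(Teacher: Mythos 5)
Your proof is correct and takes essentially the same route as the paper: identify the vectors that fail all three conclusions with the annoying vectors, note there are at most $A$ of them by \expref{Lemma}{lemma:Annoying}, and apply a union bound to show they all avoid $\calL(B)$ with probability at least $1 - A/q \ge 99/100$. Your extra observation that each annoying vector has coordinates bounded in absolute value by $d < q$ (so is nonzero mod $q$ and hence $\langle w, v\rangle$ is uniform mod $q$) is a worthwhile justification of the $1/q$ probability that the paper states without elaboration.
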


\begin{proof}
The probability that a nonzero lattice vector $x \in \calL(B_{\tint})$
satisfies $\langle w, x\rangle \equiv 0 \pmod q$ is
$1/q$. By the union bound, the probability that at least one
of the annoying vectors of $\calL(B_{\tint})$ belongs to $\calL(B)$ is
at most $A/q \leq 1/100$. Therefore, with
probability at least $99/100$, no lattice vector in $\calL(B)$ is annoying,
and the lemma follows.
\end{proof}

Lemmas~\ref{lemma:finalYes} and~\ref{lemma:finalNo} imply
\expref{Theorem}{thm:Khot}. \qed

\subsection*{Acknowledgements}
We thank Daniele Micciancio and Mario Szegedy for useful comments. We also
thank an anonymous referee for suggesting to avoid the use of huge
coordinates in Khot's proof, which in turn made \expref{Claim}{claim:sublattice}
simpler.

\bibliographystyle{tocplain}   %
\bibliography{v008a023}

\begin{tocauthors}
\begin{tocinfo}[ishay]
 Ishay Haviv\\
 School of Computer Science, The Academic College of Tel Aviv---Yaffo\\
 Tel Aviv, Israel \\[1ex]
 havivish\tocat{}tau\tocdot{}ac\tocdot{}il \\
\end{tocinfo}
\begin{tocinfo}[oded]
  Oded Regev\\
  Professor\\
  Blavatnik School of Computer Science, Tel Aviv University,\\[1ex]
   \quad and\\[1ex]
  CNRS, ENS Paris\\[1ex]
  regev\tocat{}di\tocdot{}ens\tocdot{}fr \\
  \url{http://www.cs.tau.ac.il/~odedr}
\end{tocinfo}
\end{tocauthors}

\begin{tocaboutauthors}
\begin{tocabout}[ishay]
\textsc{Ishay Haviv}
 graduated from \href{http://www.tau.ac.il/}{Tel Aviv University} in 2011
 under the supervision of \href{http://www.cs.tau.ac.il/~odedr/}{Oded
 Regev}. His research interests include computational aspects of
 lattices, coding theory, and other topics in theoretical computer science.
\end{tocabout}
\begin{tocabout}[oded]
\textsc{Oded Regev} graduated from
  \href{http://www.tau.ac.il/}{Tel Aviv University} in 2001 under the
  supervision of \href{http://www.cs.tau.ac.il/~azar/}{Yossi Azar}.
  He spent two years as a postdoc at the \href{http://www.ias.edu/}{Institute for
    Advanced Study}, Princeton, and one year at the
  \href{http://www.berkeley.edu/}{University of California, Berkeley}.
  He is currently with the
  \href{http://www.di.ens.fr}{cryptography group}
  at the
  \href{http://www.di.ens.fr}{{\'E}cole Normale Sup{\'e}rieure},
  Paris.
  His research interests include quantum computation, computational
  aspects of lattices, and other topics in theoretical computer
  science. He also enjoys photography, especially of his baby girl.
\end{tocabout}
\end{tocaboutauthors}

\end{document}